\documentclass[submission,copyright,creativecommons]{eptcs}
\usepackage{breakurl}             

\usepackage[T1]{fontenc}

\usepackage{latexsym}

\usepackage{mathtools}

\usepackage{amsopn}

\usepackage{graphics}
\usepackage{tikz}
\usepackage[all]{xy}
\usepackage{url}
\usepackage{enumerate}


\usepackage{amsmath}
\usepackage{amssymb}
\usepackage{xspace}
\usetikzlibrary{arrows,backgrounds,positioning}
\usetikzlibrary{snakes}

\DeclareMathOperator{\isPos}{pos}

\newcommand{\game}{\mathcal{G}_\phi}

\newcommand{\m}{\mathcal}

\newcommand{\SW}{\textit{SW}}

\newcommand{\calg}{\mathcal{G}}
\newcommand{\cpath}{\mathit{CPath}}
\newcommand{\lasts}{\mathit{last}}

\newtheorem{theorem}{Theorem}

\newtheorem{lemma}[theorem]{Lemma}
\newenvironment{proof}{\noindent{\bf Proof.}}{\hfill\qed\smallskip}
\newtheorem {example}[theorem]{Example}
\newtheorem{corollary}[theorem]{Corollary}

\newcommand {\qed} {\hfill \ensuremath{\Box} \\}

\newcommand{\payoff}{p}
\newcommand{\strprofile}{s}




\newcommand{\betredge}[1]{\betstep{#1}}
\newcommand{\betstep}[1]{\mbox{$\stackrel{#1}{\rightarrow}$}}









\newcounter{symbol}
\setcounter{symbol}{9} 
\newcommand{\indexsyma}[1]%
{\stepcounter{symbol}\index{zzz1 \thesymbol @\protect#1}}
\newcommand{\indexsymb}[1]%
{\stepcounter{symbol}\index{zzz2 \thesymbol @\protect#1}}
\newcommand{\indexsymc}[1]%
{\stepcounter{symbol}\index{zzz3 \thesymbol @\protect#1}}
\newcommand{\indexsymd}[1]%
{\stepcounter{symbol}\index{zzz4 \thesymbol @\protect#1}}
\newcommand{\indexsyme}[1]%
{\stepcounter{symbol}\index{zzz5 \thesymbol @\protect#1}}



%


\newcommand{\bfe}[1]{\begin{bfseries}\emph{#1}\end{bfseries}\index{#1}}

\newcommand{\myra}{\mbox{$\:\rightarrow\:$}}

\newcommand{\sse}{\mbox{$\:\subseteq\:$}}

\newcommand{\LL}{\mbox{$\ldots$}}


\newcommand{\C}[1]{\mbox{$\{{#1}\}$}}           

\newcommand{\NI}{\noindent}
\newcommand{\HB}{\hfill{$\Box$}}
\newcommand{\VV}{\vspace{5 mm}}





%
%
%
%
%

\newcommand{\szkew}[1]{\relax \setbox0=\hbox{\kern -24pt $\displaystyle#1$\kern 0pt }%
\box0}
{\catcode`\@=11 \global\let\ifjusthvtest@=\iffalse}

\newcounter{oldmycaption}














\title{Coordination Games on Directed Graphs}
\author{Krzysztof R. Apt
\institute{Centrum Wiskunde \& Informatica}
\institute{Amsterdam, The Netherlands}
\email{k.r.apt@cwi.nl}
\and
Sunil Simon
\institute{Department of CSE}
\institute{IIT Kanpur, Kanpur, India}
\email{simon@cse.iitk.ac.in}
\and
Dominik Wojtczak
\institute{University of Liverpool}
\institute{Liverpool, U.K.}
\email{d.wojtczak@liv.ac.uk}
}

\begin{document}
\maketitle

\begin{abstract}
  We study natural strategic games on directed graphs, which capture
  the idea of coordination in the absence of globally common
  strategies. We show that these games do not need to have a pure Nash
  equilibrium and that the problem of determining their existence is
  NP-complete.  The same holds for strong equilibria.  We also exhibit
  some classes of games for which strong equilibria exist and prove
  that a strong equilibrium can then be found in linear time.
\end{abstract}
\date{}

\section{Introduction}

In this paper we study a simple and natural class of strategic games.
Assume a finite directed graph. Suppose that each node selects a
colour from a private set of colours available for it. The payoff to a
node is the number of (in)neighbours who chose the same colour.

These games are typical examples of coordination games.  Recall that
the idea behind \emph{coordination} in strategic games is that players
are rewarded for choosing common strategies.  The games we study here are
specific coordination games in the absence of globally common strategies.

Recently, we studied in \cite{ARSS14}, and more fully in
\cite{ARSS15}, a very similar class of games in which the graphs were
assumed to be undirected. However, the transition from undirected to
directed graphs drastically changes the status of the games.  For
instance, for the case of directed graphs Nash equilibria do not need
to exist, while they always exist when the graph is undirected.
Consequently, in \cite{ARSS14} and \cite{ARSS15} we focused on the
problem of existence of strong equilibria. We also argued there that
such games are of relevance for the cluster analysis, the task of
which is to partition in a meaningful way the nodes of a graph.  The
same applies here. Indeed, once the strategies are possible cluster
names, a Nash equilibrium naturally corresponds to a `satisfactory'
clustering of the underlying graph.

The above two classes of games are also similar in that both
are special cases of a number of well-studied types of games.  One
of them are \bfe{polymatrix games} introduced in \cite{Jan68}. In
these games the payoff for each player is the sum of the payoffs from
the individual two player games he plays with each other player
separately.  Another are \bfe{graphical games} introduced in
\cite{KLS01}. In these games the payoff of each player depends only on
the strategies of its neighbours in a given in advance graph structure
over the set of players.

In addition both classes of games satisfy the \bfe{positive population
  monotonicity} (PPM) property introduced in \cite{KBW97} that states
that the payoff of each player weakly increases if another player
switches to his strategy.  Coordination games on graphs are examples
of games on networks, a vast research area recently surveyed in
\cite{JZ14}.  Other related references can be found in \cite{ARSS15}.

\subsection{Plan of the paper and the results}

In the next section we introduce preliminary definitions, following
\cite{ARSS15}.  We define the coordination games on directed graphs in
Section \ref{sec:colouring}. In Section \ref{sec:directed-strong} we
exhibit a number of cases when a strong equilibrium exists.  Next, in
Section \ref{sec:directed-complexity} we study complexity of the problem
of existence of Nash and strong equilibria and the problem of
determining the complexity of finding a strong equilibrium in a
natural case when it is known to exist.  Finally, in Section
\ref{sec:conc} we discuss future directions.

The main results are as follows.  If the underlying graph is a DAG, is
complete or is such that every strongly connected component (SCC) is a
simple cycle, then strong equilibria always exist and they can always
be reached from any initial joint strategy by a sequence of
coalitional improvement steps.  The same is the case when only two
colours are used.

In general Nash equilibria do not need to exist and the problem
of determining their existence is NP-complete.  The same is the case
for strong equilibria.  We also show that when every SCC is a simple
cycle, then strong equilibrium can always be found in linear time.

\section{Preliminaries}
\label{sec:prelim}


A \bfe{strategic game} $\mathcal{G}=(S_1, \ldots, S_n,$ $p_1, \ldots,
p_n)$ with $n > 1$ players, consists of a non-empty set $S_i$ of
\bfe{strategies} and a \bfe{payoff function} $p_i : S_1 \times \cdots
\times S_n \myra \mathbb{R}$, for each player $i$.

We denote $S_1 \times \cdots \times S_n$ by $S$, call each element $s
\in S$ a \bfe{joint strategy} and abbreviate the sequence $(s_{j})_{j
  \neq i}$ to $s_{-i}$. Occasionally we write $(s_i, s_{-i})$ instead
of $s$.
We call a strategy $s_i$ of player $i$ a \bfe{best response} to a
joint strategy $s_{-i}$ of his opponents if for all $ s'_i \in S_i$,
$p_i(s_i, s_{-i}) \geq p_i(s'_i, s_{-i})$. 

Fix a strategic game $\mathcal{G}$.  We say that $\mathcal{G}$
satisfies the \bfe{positive population monotonicity (PPM)} if for all
joint strategies $s$ and players $i, j$, $p_i(s) \leq p_i(s_i,
s_{-j})$.  (Note that $(s_i, s_{-j})$ refers to the joint strategy in
which player $j$ chooses $s_i$.) So if more players (here just player
$j$) choose player $i$'s strategy and the remaining players do not
change their strategies, then $i$'s payoff weakly increases.

We call a non-empty subset $K := \{k_1, \ldots, k_m\}$ of the
set of players $N:= \{1, \ldots, n\}$ a
\bfe{coalition}. Given a joint strategy $s$ we abbreviate the
sequence $(s_{k_1}, \ldots, s_{k_m})$ of strategies to $s_K$ and
$S_{k_1} \times \cdots \times S_{k_m}$ to $S_{K}$. We also write
$(s_K, s_{-K})$ instead of $s$. If there is a strategy $x$ such that
$s_i = x$ for all players $i \in K$, we also write $(x_K, s_{-K})$ for
$s.$

Given two joint strategies $s'$ and $s$ and a coalition $K$, we say
that $s'$ is a \bfe{deviation of the players in $K$} from $s$ if $K =
\{i \in N \mid s_i \neq s_i'\}$.  We denote this by $s \betredge{K}
s'$. If in addition $p_i(s') > p_i(s)$ holds for all $i \in K$, we say
that the deviation $s'$ from $s$ is \bfe{profitable}. Further, we say
that the players in $K$ \bfe{can profitably deviate from $s$} if
there exists a profitable deviation of these players from $s$.

Next, we call a joint strategy $s$ a \bfe{k-equilibrium}, where $k
\in \{1, \dots, n\}$, if no coalition of at most $k$ players can
profitably deviate from $s$.  Using this definition, a \bfe{Nash
  equilibrium} is a 1-equilibrium and a \bfe{strong equilibrium}, see
\cite{Aumann59}, is an $n$-equilibrium.

Given a joint strategy $s$, the \bfe{social welfare} of $s$ is defined as, 
\[\mathit{\SW}(s)=\sum_{i \in N} p_i(s).\]



A \bfe{coalitional improvement path} (\textit{c-improvement path}), is
a maximal sequence $\rho=(s^1, s^2, \dots)$ of joint strategies such
that for every $k > 1$ there is a coalition $K$ such that $s^k$ is a
profitable deviation of the players in $K$ from $s^{k-1}$. If $\rho$
is finite then by $\lasts(\rho)$ we denote the last element of the
sequence.  Clearly, if a c-improvement path is finite, its last
element is a strong equilibrium. We say that $\mathcal{G}$ has the
\bfe{finite c-improvement property} (\bfe{c-FIP}) if every
c-improvement path is finite. Further, we say that the function $P: S
\rightarrow A$, where $A$ is a set, is a \bfe{generalized ordinal
  c-potential}, also called \bfe{generalized strong potential}, see
\cite{Harks13,Holzman97}, for $\m G$ if for some strict partial
ordering $(P(S), \succ)$ the fact that $s'$ is a profitable deviation
of the players in some coalition from $s$ implies that $P(s') \succ
P(s)$.

If a finite game admits a generalized ordinal c-potential then it has
the c-FIP.  The converse also holds, see, e.g., \cite{ARSS15}.  We say
that $\mathcal{G}$ is \bfe{c-weakly acyclic} if for every
joint strategy there exists a finite c-improvement path that starts at
it. Note that games that have the c-FIP or are c-weakly acyclic
game have a strong equilibrium.

We call a c-improvement path an \bfe{improvement path} if each
deviating coalition consists of one player. The notions of a game
having the \bfe{FIP} or being \bfe{weakly acyclic}, see
\cite{You93,Mil96}, are then defined by referring to improvement paths
instead of c-improvement paths.

\section{Coordination games on directed graphs}
\label{sec:colouring}

We now introduce the class of games we are interested in.  Fix a
finite set of colours $M$ and a weighted directed graph $(G,w)$ without self
loops in which each edge $e$ has a non-negative weight $w_e$
associated with.  We say that a node $j$ is a \bfe{neighbour} of the
node $i$ if there is an edge $j \to i$ in $G$.  Let $N_i$ denote the
set of all neighbours of node $i$ in the graph $G$.  By a \bfe{colour
  assignment} we mean a function that assigns to each node of $G$ a
finite non-empty set of colours.  For technical reasons we also
introduce the concept of a \bfe{bonus}, which is a function $\beta$
that to each node $i$ and a colour $c$ assigns a natural number
$\beta(i,c)$. (We allow zero as a natural number.)

Given a weighted graph $(G,w)$, a colour assignment $A$ and a bonus function $\beta$
we define a strategic game $\mathcal{G}(G,w,A,\beta)$ as
follows:

\begin{itemize}
\item the players are the nodes,
  
\item the set of strategies of player (node) $i$ is the set of colours
  $A(i)$; we refer to the strategies as \bfe{colours} and to joint
  strategies as \bfe{colourings},

\item each payoff function is defined by
\[
p_i(s) = \sum_{j \in N_i,\, s_i = s_j} w_{j \to i} + \beta(i,s_i).
\]
\end{itemize}

So each node simultaneously chooses a colour and the payoff to the
node is the sum of the weights of the edges from its neighbours that
chose its colour augmented by the bonus to the node from choosing the
colour.  We call these games \bfe{coordination games on directed
  graphs}, from now on just \bfe{coordination games}.  Because weights
are non-negative each coordination game satisfies the PPM.

In the paper we mostly consider the case when all weights are 1 and all bonuses
are 0. Then each payoff function is simply defined by
\[
p_i(s) := |\{j \in N_i \mid s_i = s_j\}|.
\]

\begin{example} \label{exa:payoff}
\rm

Consider the directed graph and the colour assignment
depicted in Figure~\ref{fig:graph}.

\begin{figure}[htbp]
\centering
$
\def\objectstyle{\scriptstyle}
\def\labelstyle{\scriptstyle}
\xymatrix@R=25pt @C=40pt{
& & 7 \ar[d]^<{\{\underline{a}\}}\\
& &1 \ar[rd]_{} \ar[ddr]_{}\ar@{}[rd]^<{\{a,\underline{b}\}}\\
&6 \ar[ru]_{} \ar@{}[ur]^<{\{b,\underline{c}\}}& &4 \ar[d]_{}\ar@{}[ul]_<{\{a,\underline{b}\}} \\
&3 \ar[u]^<{\{b,\underline{c}\}} \ar[uur]_{}& &2 \ar@{}[u]_<{\{a,\underline{c}\}} \ar[ld]^{} \ar[ll]_{}\\
9 \ar[ru]_<{\{\underline{b}\}}& &5 \ar[ul]_{}\ar@{}[ur]_<{\{a,\underline{c}\}}& &8 \ar[ul]^<{\{\underline{c}\}}
}$
    
\caption{A directed graph with a colour assignment. \label{fig:graph}}
\end{figure}
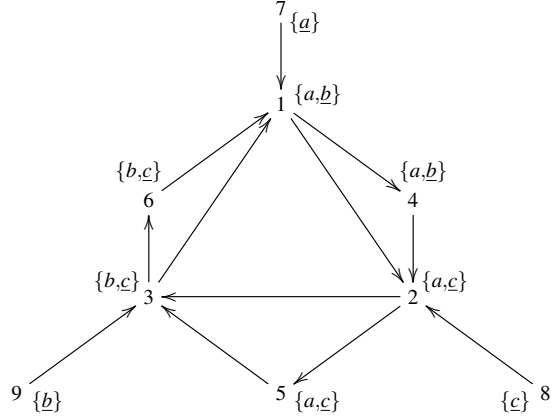

Take the joint strategy $s$ that consists of the underlined strategies.
%
%
%
%
%
Then the payoffs are as follows:

\begin{itemize}
\item 0 for the nodes 1, 7, 8 and 9,

\item 1 for the nodes 2, 4, 5, 6, 

\item 2 for the node 3.
\end{itemize}
Note that the above joint strategy is not a Nash equilibrium. For example,
node 1 can profitably deviate to colour $a$.
\HB
\end{example}

In what follows we study the problem of existence of
Nash equilibria or strong equilibria in coordination games.

Finally, given a directed graph $G$ and a set of nodes $K$, we denote by $G[K]$
the subgraph of $G$ induced by $K$.





\section{Strong equilibria}
\label{sec:directed-strong}

In this section we focus on the existence of strong equilibria.
To start with, we have the following positive result.

\begin{theorem} \label{thm:DAG}
Every coordination game whose underlying graph is a DAG
has the c-FIP and a fortiori a strong equilibrium.
Further, each Nash equilibrium is a strong equilibrium.
\end{theorem}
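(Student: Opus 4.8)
The plan is to exhibit a generalized ordinal c-potential for the game, which by the result quoted just before the theorem immediately yields the c-FIP and hence — since any maximal c-improvement path is then finite and its last element is a strong equilibrium — the existence of a strong equilibrium. The whole argument rests on one structural feature of a DAG: assign to each node $i$ the value $\daglevel(i)$ equal to the length of a longest directed path in $G$ ending in $i$ (so sources get level $0$). Since $G$ is acyclic this is well-defined and finite, and crucially every in-neighbour $j \in N_i$ satisfies $\daglevel(j) < \daglevel(i)$, because any path ending in $j$ extends through the edge $j \to i$ to a strictly longer path ending in $i$.

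For the potential I would take, for each level $\ell$, the aggregate payoff $a_\ell(s) = \sum_{i\,:\,\daglevel(i) = \ell} p_i(s)$ of the nodes sitting at level $\ell$, and set $\dagpotential(s) = (a_0(s), a_1(s), \ldots, a_L(s))$, where $L$ is the maximum level; these tuples I compare lexicographically, with the lowest index most significant. Since the game is finite the range of $\dagpotential$ is finite and this is a strict (indeed total) order. To check the potential property, suppose $s \betredge{K} s'$ is a profitable deviation and let $\ell^\ast = \min\{\daglevel(i) \mid i \in K\}$. No node of level below $\ell^\ast$ lies in $K$, and neither do the in-neighbours of any such node (their levels are even smaller), so every node of level $< \ell^\ast$ keeps both its own colour and the colours of all its in-neighbours; hence $a_\ell(s') = a_\ell(s)$ for all $\ell < \ell^\ast$. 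At level $\ell^\ast$ each node not in $K$ likewise keeps its colour and its in-neighbours' colours (those in-neighbours have level $< \ell^\ast$), so its payoff is unchanged, while each of the at least one nodes of $K$ at level $\ell^\ast$ strictly increases its payoff because the deviation is profitable. Therefore $a_{\ell^\ast}(s') > a_{\ell^\ast}(s)$, and since the lower components agree we get $\dagpotential(s') \succ \dagpotential(s)$, regardless of how the higher components move.

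For the final claim, that every Nash equilibrium is already a strong equilibrium, I would argue by contraposition using the same minimal-level idea. If $s$ were not a strong equilibrium, some coalition $K$ could profitably deviate to $s'$; pick $i \in K$ with $\daglevel(i)$ minimal. Its in-neighbours all have smaller level, so none of them lies in $K$ and they all retain their colours in $s'$. Because $p_i$ depends only on $i$'s own colour and the colours of its in-neighbours, the unilateral move of $i$ to $s'_i$ yields $p_i(s'_i, s\mi) = p_i(s') > p_i(s)$, so $i$ alone can profitably deviate from $s$; thus $s$ is not a Nash equilibrium.

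The bookkeeping with the minimal level $\ell^\ast$ is the one place demanding care — one must verify that passing to the least level occupied by the coalition simultaneously freezes everything strictly below it and freezes all the relevant in-neighbours of the level-$\ell^\ast$ nodes — but once that is pinned down the argument is uniform in the weights $w_{j\to i}$ and bonuses $\beta(i,\cdot)$, since these affect only the magnitudes of the $a_\ell$ and never the qualitative monotonicity established above.
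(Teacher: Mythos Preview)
Your proof is correct and follows essentially the same approach as the paper: a lexicographic generalized ordinal c-potential built from a topological stratification of the DAG, combined with the observation that the coalition member of least rank has all its in-neighbours outside the coalition. The only cosmetic difference is that the paper fixes a full topological ordering $\pi$ and takes the individual payoff sequence $(p_{\pi(1)}(s),\ldots,p_{\pi(n)}(s))$ as the potential, whereas you aggregate payoffs level by level; both work for the same structural reason.
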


\begin{proof}
Given a DAG $G := (V,E)$, where $V = \{1, \LL, n\}$, we fix a 
permutation $\pi$ of $1, \LL, n$ such that for all $i,j \in V$ 
\begin{equation}
 \label{equ:rank}
\mbox{if $i < j$, then $(\pi(j) \to \pi(i)) \not \in E$.}
\end{equation}
So if $i < j$, then the payoff of the node $\pi(i)$ does not depend
on the strategy selected by the node $\pi(j)$.

Then given a coordination game whose underlying directed graph is the
DAG $G$ we associate with each joint strategy $s$ the sequence
$p_{\pi(1)}(s), \LL, p_{\pi(n)}(s)$ that we abbreviate to
$p_{\pi}(s)$.  We now claim that $p_{\pi}: S \to \mathbb{R}^n$ is a
generalized ordinal c-potential when we take for the partial ordering
$\succ$ on $p_{\pi}(S)$ the lexicographic ordering $>_{lex}$ on the
sequences of reals.

Suppose that some coalition $K$ profitably deviates from the joint
strategy $s$ to $s'$.  Choose the smallest $j$ such that $\pi(j) \in K$.
Then $p_{\pi(j)}(s') > p_{\pi(j)}(s)$ and by (\ref{equ:rank})
$p_{\pi(i)}(s') = p_{\pi(i)}(s)$ for $i < j$.  This implies that
$p_{\pi}(s') >_{lex} p_{\pi}(s)$, as desired.
Hence the game has the c-FIP.

To prove the second claim, take a Nash equilibrium $s$ and suppose it
is not a strong equilibrium. Then some coalition $K$ can profitably
deviate from $s$ to $s'$.  Choose the smallest $j$ such that $\pi(j)
\in K$.  Then $p_{\pi(j)}(s') > p_{\pi(j)}(s)$ and by (\ref{equ:rank})
the payoff of $\pi(j)$ does not depend on the strategies selected by
the other members of the coalition $K$. Hence
$p_{\pi(j)}(s') = p_{\pi(j)}(s'_{\pi(j)}, s_{-\pi(j)})$, which contradicts
the assumption that $s$ is a Nash equilibrium.
\end{proof}

The next result deals with a class of coordination games introduced in
\cite{ARSS15}.  Given the set of colours $M$, we say that a directed
graph $G$ is \bfe{colour complete} (with respect to a colour
assignment $A$) if for every colour $x \in M$ each component of
$G[V_x]$ is complete, where $V_x = \{i \in V \mid x \in A_i\}$. In
particular, every complete graph is colour complete.

\begin{theorem} \label{thm:complete2}
  Every coordination game on a colour complete directed graph has the c-FIP
  and a fortiori a strong equilibrium.
\end{theorem}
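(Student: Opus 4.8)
The plan is to exhibit a generalized ordinal c-potential; by the remark preceding Theorem~\ref{thm:DAG} this immediately yields the c-FIP and hence a strong equilibrium. The first step is a structural observation that makes colour completeness bite. Fix any colouring $s$. If two nodes $i,j$ satisfy $s_i=s_j=x$ and are joined by an edge, then both lie in $V_x$ and in the same component of $G[V_x]$, so by colour completeness that component is complete and both $i\to j$ and $j\to i$ are edges. Hence in every colouring the monochromatic edges form a disjoint union of bidirectional cliques, and the same-coloured in-neighbours of a node $i$ are exactly the other members of its monochromatic clique $C_i(s)$. Consequently $p_i(s)=\sum_{j\in C_i(s)\setminus\{i\}} w_{j\to i}+\beta(i,s_i)$, and in the unweighted case simply $p_i(s)=|C_i(s)|-1$. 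A profitable deviation by a coalition $K$ thus means that every $i\in K$ is moved into a monochromatic clique in which its within-clique in-weight strictly increases; in the unweighted case each deviator lands in a strictly larger clique.

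Next I would choose the potential. The tempting candidate, the social welfare $\sum_i p_i(s)$, does \emph{not} work: a coalition can drain several size-$3$ cliques to build one size-$4$ clique, so that every deviator's payoff rises from $2$ to $3$, yet $\sum_C|C|^2$ (which in the unweighted case equals $\sum_i p_i(s)+n$) strictly drops. So an aggregate potential is too coarse. Instead I would compare two colourings by the vector of their clique sizes listed in non-increasing order, ordered lexicographically (equivalently, the vector of payoffs sorted in non-increasing order). The claim to prove is that a profitable coalitional deviation strictly increases this vector lexicographically.

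For the unweighted case I would argue as follows. Every deviator ends in a strictly larger clique, so the largest clique size can never decrease: if some deviator leaves a current maximum-size clique it must move to one that becomes strictly larger still, and if no deviator leaves a maximum clique then that clique survives. Comparing the two sorted vectors at the first coordinate where they differ, one shows that this coordinate witnesses an increase, because the destination cliques created by the deviators dominate, in the sorted order, the shrinking source cliques they leave behind. Carrying out this bookkeeping — identifying the correct witnessing clique size and accounting simultaneously for cliques that grow, shrink, appear and vanish under a single coalitional move — is the technical heart of the argument.

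The step I expect to be hardest is the general weighted case. Because the weight a node contributes on its \emph{out}-edges does not enter its own payoff, a non-deviator can suffer a large payoff loss when a heavily out-weighting deviator leaves its clique; concretely, a single profitable recolouring can strictly decrease both the social welfare and the sorted payoff vector, so neither serves as a potential once weights are asymmetric. For the weighted statement I would therefore either design a weight-aware lexicographic quantity that is insensitive to such induced losses of non-deviators, or abandon an explicit potential and instead rule out infinite c-improvement paths directly from the clique structure, using that each deviator strictly improves its received in-weight. Making one of these routes go through is the crux; the unweighted clique-size potential above is the model to generalise.
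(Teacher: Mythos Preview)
Your structural observation --- that in a colour complete graph every monochromatic connected set is a bidirectional clique, so $p_i(s)=|C_i(s)|-1$ and all members of a monochromatic clique share the same payoff --- is exactly right, and is in fact the entire content of the paper's proof. The paper does not build a potential at all: it simply notes that this equality of payoffs along monochromatic edges is the definition of a \emph{uniform} coordination game, and then invokes the result from \cite{ARSS15} that every uniform game has the c-FIP. So where you propose to construct a lexicographic potential and carry out the bookkeeping yourself, the paper outsources precisely that work to the cited reference; your first paragraph already proves the reduction the paper uses.

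Your direct route does go through in the unweighted case, and the ``technical heart'' is shorter than you fear. Let $M=\max_{i\in K}|C_i(s')|$. No deviator lies in a clique of size $>M$ in either $s$ (since $|C_i(s)|<|C_i(s')|\le M$) or $s'$ (by choice of $M$), so every clique of size $>M$ consists entirely of non-deviators and is untouched by the move. Likewise every size-$M$ clique in $s$ is all non-deviators and survives intact in $s'$, while the deviator attaining $M$ sits in a \emph{new} size-$M$ clique in $s'$. Hence the sorted vector of $|C_i(\cdot)|$ values agrees above $M$ and has strictly more entries equal to $M$ after the deviation, which gives the lexicographic increase.

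Your worry about weights and bonuses is misplaced for this theorem. The statement, as proved in the paper, concerns the basic coordination game (all weights $1$, all bonuses $0$); your own clique computation shows why --- with asymmetric weights two same-coloured adjacent nodes need not have equal payoff, so the game is not uniform and the paper's cited lemma does not apply either. Your intuition that neither social welfare nor the sorted payoff vector is a potential in the weighted setting is correct, but that setting is simply outside the scope here, so you can drop that discussion; the unweighted argument above is already a complete, self-contained alternative to the paper's citation.
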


\begin{proof}
  In \cite{ARSS15} it is proved that every uniform game has
  the c-FIP, where we call a coordination game on a directed graph $G$
  \bfe{uniform} if for every joint strategy $s$ and for every edge $i
  \to j \in E$ it holds: if $s_i = s_j$ then $p_i(s) = p_j(s)$.  (In
  \cite{ARSS15} only undirected graphs are considered, but the proof
  remains valid without any change.)  Clearly every coordination game
  on a colour complete directed graph is uniform.
\end{proof}

It is difficult to come up with other classes of directed
graphs for which the coordination game has the c-FIP. Indeed,
consider the following example.

\begin{example} \label{exa:rotate}
\rm

Consider a coordination game on a simple cycle $1 \to 2 \to \LL \to n
\to 1$, where $n \geq 3$ and such that the nodes share at least two
colours, say $a$ and $b$.  Take the initial colouring $(a, b, \LL,
b)$. Then both $(a, \underline{b}, b, \LL, b), (a, a, b, \LL, b)$ and
$(\underline{a}, a, b, \LL, b), (b, a, b, \LL, b)$ are profitable
deviations. (To increase readability we underlined the strategies that
were modified.)  After these two steps we obtain a colouring that is a
rotation of the first one. Iterating we obtain an infinite improvement
path.

Hence the coordination game does not have the FIP and a fortiori
the c-FIP.
\HB
\end{example}

However, a weaker result holds, which, for reasons that will soon
become clear, we prove for a larger class of games.

\begin{theorem} \label{thm:c-weakly}
  Every coordination game with bonuses on a simple
  cycle is c-weakly acyclic, so a fortiori has a strong equilibrium.
\end{theorem}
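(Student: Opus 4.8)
The plan is to reduce the cycle to a directed acyclic graph by permanently fixing the colour of a single, well-chosen node and then to invoke Theorem~\ref{thm:DAG}. Write the cycle as $1 \to 2 \to \cdots \to n \to 1$, so that each node $i$ has exactly one neighbour, its predecessor $i-1$ (indices mod $n$), and $p_i(s) = w_{(i-1)\to i}\,[\,s_{i-1}=s_i\,] + \beta(i,s_i)$. If I freeze a node $i^*$ at a colour $c^*$, the remaining graph on the other nodes is the path $i^*+1 \to \cdots \to i^*-1$, which is a DAG; moreover the frozen node now influences its successor $i^*+1$ only through the term $w_{i^*\to i^*+1}\,[\,s_{i^*+1}=c^*\,]$, i.e.\ as an extra bonus for $i^*+1$. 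This is exactly why the statement is phrased for coordination games \emph{with bonuses}: the natural subproblem is a path game carrying bonuses, and Theorem~\ref{thm:DAG} already covers such games (bonuses are part of the general coordination game $\mathcal{G}(G,w,A,\beta)$).

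First I would decide which node to freeze. Let $V$ be the largest payoff any single node can obtain in any colouring, namely the maximum over nodes $i$ and colours $c\in A(i)$ of $\beta(i,c)$ augmented by $w_{(i-1)\to i}$ whenever $c\in A(i-1)$ (so that the match is actually realizable), and let $i^*,c^*$ attain it. The point of this choice is that once $i^*$ genuinely receives $V$ it sits at its global optimum and hence can never be a member of a profitably deviating coalition, so it is safe to freeze. I would then proceed in four steps: (a) drive $i^*$ to payoff $V$ by a finite c-improvement path from the arbitrary initial colouring; (b) freeze $i^*$ at $c^*$, producing the path game with bonuses described above; (c) apply Theorem~\ref{thm:DAG} to run that path game to one of its strong equilibria $u$ along a finite c-improvement path, during which $i^*$ does not move; and (d) verify that the concatenation is a finite c-improvement path of the whole cycle whose last element $u$ (extended by $s_{i^*}=c^*$) is a strong equilibrium. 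The verification in (d) is largely clean: a profitable deviation by a coalition avoiding $i^*$ would be a profitable deviation in the path game (the bonus on $i^*+1$ encodes $s_{i^*}=c^*$, and every other payoff is unchanged), contradicting that $u$ is a strong equilibrium there; and a coalition containing $i^*$ cannot be profitable because $i^*$ is already at its maximum $V$.

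I expect the crux to be step (a) together with keeping $i^*$ pinned at $V$ throughout step (c), in short establishing and maintaining a \emph{robust} lock. If $V$ is realized purely by the bonus, so that $\beta(i^*,c^*)\ge w_{(i^*-1)\to i^*}+\beta(i^*,c)$ for every $c\in A(i^*)$, then a single deviation of $i^*$ to $c^*$ suffices and the lock survives no matter what the predecessor does afterwards; this case is easy. The delicate situation — and the only one that occurs when all bonuses are $0$, as on the plain cycle of Example~\ref{exa:rotate} — is when $V$ forces $i^*$ to match its predecessor $i^*-1$. Here making $i^*-1$ adopt $c^*$ need not be profitable for $i^*-1$ on its own, so one is pushed into a coalitional move that recolours an entire arc terminating at $i^*$ to a common colour (the analogue of the step $(a,\dots,a)\to(b,\dots,b)$), and one must further ensure that the subsequent path dynamics do not pull $i^*-1$ away from $c^*$ and thereby unmatch $i^*$; concretely, the strong equilibrium $u$ produced in (c) must satisfy $u_{i^*-1}=c^*$. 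The natural remedy is to add to $i^*-1$, in the reduced path game, a bonus for $c^*$ exceeding $w_{(i^*-2)\to(i^*-1)}$, which is still a DAG game and hence still has the c-FIP while forcing $u_{i^*-1}=c^*$; the real work is then to show that every deviation used in this augmented path game is also profitable in the original cycle at the colourings that actually arise, so that the constructed sequence is a legitimate c-improvement path of $\mathcal{G}$. Settling that transfer is where I anticipate the main difficulty of the proof to lie.
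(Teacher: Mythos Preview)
Your route---freeze a globally optimal node and reduce to a path---is quite different from the paper's. The paper first proves directly that the game is \emph{weakly} acyclic: walking around the cycle at most three times and letting each player in turn play a best response taken from her maximal-bonus colours reaches a Nash equilibrium (Lemma~\ref{lem:weakly}). It then shows by a one-line argument that on a simple cycle every Nash equilibrium is already an $(n-1)$-equilibrium: any coalition of size $<n$ contains a member whose predecessor is outside the coalition, and for him the deviation would already be a profitable unilateral move (Lemma~\ref{lem:k}). Finally it observes that on a simple cycle with integer bonuses social welfare weakly increases along every c-improvement step and \emph{strictly} increases when the full coalition $N$ deviates; hence the loop ``run to a NE; if not strong, let $N$ deviate'' terminates. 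No node is ever frozen, and no auxiliary game is introduced.

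Your own proposal breaks at precisely the two places you flag, and the patches you suggest do not work. For step~(a), take the $4$-cycle $1\to2\to3\to4\to1$ with $A(1)=\{a,b\}$, $A(2)=\{b,c\}$, $A(3)=\{c,d\}$, $A(4)=\{d,a\}$, all bonuses $0$, and start at $s=(b,c,d,d)$. The pair $(i^*,c^*)=(1,a)$ is a legitimate maximiser of $V=1$. But $p_4(s)=1$ is already maximal, so node $4$ can never belong to a profitable coalition; hence $s_4=d$ along \emph{every} c-improvement path from $s$, and since $d\notin A(1)$ node $1$ stays at payoff $0$ forever. Your ``arc recolouring to $c^*$'' fails for the same reason: any arc terminating at $1$ contains node $4$, and recolouring $4$ to $a$ drops $p_4$ from $1$ to $0$. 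One can of course rescue this particular instance by choosing $i^*=4$, but your selection rule (``any $(i^*,c^*)$ attaining $V$'') does not do that, and you give no argument that some choice always makes step~(a) feasible.

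The artificial-bonus fix for step~(c) has the same defect you anticipate, and it is fatal rather than merely technical. With all bonuses $0$, suppose $i^*$ is frozen at $c^*$ and along the augmented path dynamics $i^*{-}1$ currently matches $i^*{-}2$ at some $d\neq c^*$ (payoff $1$). The fictitious bonus $B>1$ makes switching $i^*{-}1$ to $c^*$ profitable in the augmented game, but in the original cycle that switch takes $p_{i^*-1}$ from $1$ to $0$; so the augmented c-improvement path is simply not a c-improvement path of $\mathcal{G}$. There is no reason to expect a c-improvement path in the augmented game whose every step transfers, because the only step that forces $u_{i^*-1}=c^*$ is exactly a step that is unprofitable for $i^*{-}1$ in the true game. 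The paper's approach sidesteps all of this: reaching a Nash equilibrium uses only unilateral moves, the gap between ``Nash'' and ``strong'' on a cycle is a single type of deviation (all of $N$), and social welfare bounds how many such full-coalition deviations can occur.
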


To prove it, we first establish a weaker claim.

\begin{lemma} \label{lem:weakly}
  Every coordination game with bonuses on a simple
  cycle is weakly acyclic.
\end{lemma}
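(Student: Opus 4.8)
The plan is to exploit the fact that on a simple cycle each node $i$ has a \emph{unique} in-neighbour, its predecessor $\spred(i)$ (with $\spred(1)=n$), so that $p_i(s)$ depends only on $s_i$ and on $s_{\spred(i)}$: node $i$ always receives its bonus $\beta(i,s_i)$, plus the weight $w_{\spred(i)\to i}$ exactly when $s_i=s_{\spred(i)}$. Consequently the best response of $i$ to a predecessor colour $p$ has a simple \emph{copy-or-reset} form: copy $p$ when matching is worthwhile, i.e.\ when $w_{\spred(i)\to i}+\beta(i,p)$ is at least the best attainable bonus $\max_{c\in A(i)}\beta(i,c)$, and otherwise switch to a fixed best-bonus colour $c_i^{*}\in\argmax_{c}\beta(i,c)$. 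I will use this description throughout. Since an improvement path is a maximal sequence of profitable single-node deviations, a finite improvement path necessarily ends in a Nash equilibrium; so it suffices to build, from any given colouring, a finite sequence of profitable single-node moves reaching a Nash equilibrium.

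My overall strategy is to break the cycle at node $1$. First I would move node $1$, by profitable deviations, to a colour at which it becomes \emph{permanently content}, meaning it is best-responding in every colouring that can subsequently arise. Once node $1$ is thus frozen, the remaining interaction runs along the path $1\to 2\to\cdots\to n$, and I would finish exactly as in the DAG case: regard nodes $2,\ldots,n$ as the coordination game on the path $1\to\cdots\to n$ with node $1$ fixed, to which Theorem~\ref{thm:DAG} applies, and let $2,3,\ldots,n$ best-respond in this order, so that the lexicographic value $p_{\pi}(s)$ of Theorem~\ref{thm:DAG} strictly increases at each step and the path stabilises. The concatenation of these moves is the required finite improvement path.

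To make node $1$ permanently content, compare its best attainable bonus $\max_{c\in A(1)}\beta(1,c)$ with its best attainable matched payoff $\max_{c\in A(1)\cap A(n)}\big(w_{n\to 1}+\beta(1,c)\big)$ (the latter vacuous if $A(1)\cap A(n)=\emptyset$). If the bonus term is at least as large, then moving node $1$ to a best-bonus colour $c_1^{*}$ secures that value regardless of node $n$: matching could only add $w_{n\to 1}$ on a colour of $A(n)$, and such a gain would contradict dominance of the bonus term, so node $1$ always obtains exactly $\max_{c}\beta(1,c)$ and is content forever; Phase~2 then completes the proof. The difficult case is when node $1$'s maximum $V$ is attained only by matching node $n$ on some $c^{*}\in A(1)\cap A(n)$, since node $1$ stays at its optimum only while node $n$ actually plays $c^{*}$.

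The main obstacle is precisely this seam: node $1$'s contentment is cyclically coupled to node $n$, whose colour in turn depends, through the path $2,\ldots,n$, back on node $1$. The plan is to make the coupling explicit as a self-map $\psi$ on the colour of node $n$ — best-respond node $1$ to node $n$'s current colour, then propagate best responses along $2,\ldots,n$ — whose fixed points are exactly the Nash equilibria, each application of $\psi$ being realised by a block of profitable single-node moves. It then remains to show that $\psi$, started from node $n$'s colour in the given colouring, reaches a fixed point in finitely many steps; this is where Example~\ref{exa:rotate} warns that one cannot simply follow an arbitrary improvement path. I would prove termination from the copy-or-reset structure: each reset installs one of the finitely many fixed colours $c_j^{*}$, and the colour emerging at node $n$ after a propagation is the one installed by the \emph{last} node that resets, so tracking this last-reset position (equivalently, node $1$'s attained payoff, which cannot grow indefinitely) yields a descent argument forcing $\psi$ to stop cycling. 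Making this descent rigorous — thereby showing that the matching preferences around the cycle can always be rendered consistent — is the crux of the proof.
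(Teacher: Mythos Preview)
You correctly isolate the seam at node $1$ and reduce the problem to convergence of your self-map $\psi$, but the sketched descent does not go through, and in fact $\psi$ need not converge at all. Take the $3$-cycle $1\to 2\to 3\to 1$ with common colour set $\{a,b,c\}$ and bonuses $\beta(1,\cdot)=(2,0,1)$, $\beta(2,\cdot)=(1,2,0)$, $\beta(3,\cdot)=(0,1,2)$ (listed in the order $a,b,c$), so that $c_1^{*}=a$, $c_2^{*}=b$, $c_3^{*}=c$. Your copy-if-tied rule gives $\psi(a)=c$, $\psi(c)=b$, $\psi(b)=c$: the map $\psi$ has \emph{no} fixed point and cycles $b\leftrightarrow c$ forever, with the last-reset position oscillating between nodes $2$ and $3$, so the proposed descent quantity is not monotone. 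Yet the game has Nash equilibria --- for instance $(a,b,c)$, where every player sits at her maximum bonus and is merely \emph{indifferent} to copying. Such equilibria are invisible to a map that breaks ties toward copying, so your claim that the fixed points of $\psi$ coincide with the Nash equilibria is false, and termination of the actual improvement path cannot be read off from $\psi$.

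The paper's proof uses the \emph{opposite} tie-breaking: whenever player $i$ updates, she must pick a best response lying in $\mathit{MA}(i):=\argmax_{c\in A(i)}\beta(i,c)$. Because bonuses are natural numbers and the edge weight is $1$, some maximum-bonus colour is always a best response, so this rule is well-defined. One then sweeps $1,\ldots,n-1$ (Phase~1), then once around starting from $n$ (Phase~2), then once more (Phase~3). After Phase~2 every player holds a maximum-bonus colour; hence in Phase~3 a player can strictly improve only by copying her predecessor's colour, and since that colour is the one node $n{-}1$ held at the end of Phase~2 and simply propagates forward, the sweep must halt by the time it returns to $n{-}1$. Three passes suffice, with no fixed-point iteration. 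The integrality hypothesis is essential --- as the paper remarks in its concluding section, the lemma fails for weighted edges --- so your general-$w$ formulation cannot be carried through in that generality either.
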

\begin{proof}
  To fix the notation, suppose that the considered graph is $1 \to 2
  \to \ldots \to n \to 1$.  Below for $i \in
\C{2,\LL,n}$, $i \ominus 1=i-1$, and $1 \ominus 1=n$.

Let $\mathit{MA}(i)$ be the set of available colours to player $i$
with the maximal bonus, i.e., $\mathit{MA}(i) = \{c \in A(i) \mid
\beta(i,c) = \max_{d \in A(i)} \beta(i,d)\}$.  Let
$\mathit{BR}(i,\strprofile_{-i})=\{c \in \mathit{MA}(i) \mid \text{colour $c$ is a best response}$ $\text{of player $i$ to  $\strprofile_{-i}$\}}$
be the set of best responses among the colours with the highest bonus
only.  The set $\mathit{BR}(i,\strprofile_{-i})$ is never empty
because of the game structure and the fact that bonuses are natural
numbers.  Indeed, if $\strprofile_{i \ominus 1} \in \mathit{MA}(i)$,
then $\mathit{BR}(i,\strprofile_{-i}) = \{\strprofile_{i \ominus 1}\}$
and otherwise $\mathit{BR}(i,\strprofile_{-i})$ is a non-empty
subset of $\mathit{MA}(i)$.
  
  Below we stipulate that whenever a player $i$ updates in a joint
  strategy $\strprofile$ his strategy to a best response to $s_{-i}$,
  he always selects a strategy from $\mathit{BR}(i,\strprofile_{-i})$.
  
  Consider an initial joint strategy $\strprofile$.  We construct
  a finite improvement path that starts with $\strprofile$ as follows.\\
  {\em Phase 1.} We proceed around the cycle and consider the players
  $1$, $2$, \ldots, $n-1$ in that order.  For each player in turn, if
  his current strategy is not a best response, we update it to a
  best response respecting the above proviso.  When this phase ends
  the current strategy of each of the players $1$, $2$, \ldots, $n-1$
  is a best response. 
  
  If at this moment the current strategy of player $n$ is also a best
  response, the current joint strategy $s'$ is a Nash equilibrium and
  the path is constructed. Otherwise we move to the next phase.  \\
  {\em Phase 2.} We repeat the same process as in Phase~1, but
  starting with $s'$ and player $n$ and proceeding at most $n$ steps.
  From now on at each step at least $n-1$ players have a best response
  strategy. So if at a certain moment the current strategy of the
  considered player is a best response, the current joint strategy is
  a Nash equilibrium and the path is constructed.
  Otherwise, after $n$ steps, we move to the final phase.  \\
  {\em Phase 3.} We repeat the same process as in Phase~2. Now in the
  initial joint strategy each player $i$ has a strategy from
  $\mathit{MA}(i)$.  Because of the definition of
  $\mathit{BR}(i,\strprofile_{-i})$ each player can improve his payoff
  only if he switches to the strategy selected by his predecessor.  So
  after at most $n$ steps this phase terminates and we obtain a Nash
  equilibrium.
\end{proof}

By Lemma \ref{lem:weakly} every coordination
game on a simple cycle has a Nash equilibrium. However, not every
Nash equilibrium is then a strong equilibrium.

\begin{example}
\rm
Consider the directed graph depicted in Figure~\ref{fig:2}, together with the sets of colours
associated with the nodes. 

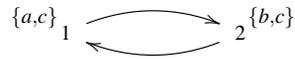
\begin{figure}[htbp]
\centering
$ \def\objectstyle{\scriptstyle} \def\labelstyle{\scriptstyle}                         
  \xymatrix@W=10pt @C=50pt @R=5pt{
& \\
1 \ar@/^0.7pc/[r]^{} \ar@{}[u]^<{\{a,c\}}& 2 \ar@/^0.7pc/[l]^{}
\ar@{}[u]_<{\{b,c\}}                        
}$
  \caption{Nash equilibria versus strong equilibria}
  \label{fig:2}
\end{figure}

Clearly $(a,b)$ is a Nash equilibrium. However, it is not a strong
equilibrium since the coalition $\{1,2\}$ can profitably deviate
to $(c,c)$, which is a strong equilibrium.
\HB
\end{example}

On the other hand, the following observation holds.

\begin{lemma} \label{lem:k}
  Consider a coordination game with bonuses on a simple
  cycle with $n$ nodes. Then every Nash equilibrium is a
  $k$-equilibrium for all $k \in \{1, \LL, n-1\}$.
\end{lemma}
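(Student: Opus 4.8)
The plan is to exploit the extreme locality of payoffs on a simple cycle. On the cycle $1 \to 2 \to \ldots \to n \to 1$ each node $i$ has a single neighbour, namely its predecessor $i \ominus 1$, so $p_i(s)$ depends only on $s_i$ and on $s_{i \ominus 1}$. I would fix a Nash equilibrium $s$ and, aiming at a contradiction, assume that some coalition $K$ with $1 \le |K| \le n-1$ profitably deviates from $s$ to $s'$; by definition this means $p_i(s') > p_i(s)$ for every $i \in K$.

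The key step is to locate a \emph{boundary node} of $K$, i.e.\ a node $a \in K$ whose predecessor $a \ominus 1$ lies outside $K$. Such a node must exist: if every node of $K$ had its predecessor in $K$ as well, then $K$ would be closed under taking predecessors, and since repeatedly taking predecessors on a cycle visits every vertex, this would force $K = N$, contradicting $|K| \le n-1$. Fix such a boundary node $a$. Because $a \ominus 1 \notin K$, its colour is left unchanged by the deviation, so $s'_{a \ominus 1} = s_{a \ominus 1}$.

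Now the Nash property finishes the argument. Since $p_a$ depends only on $s_a$ and $s_{a \ominus 1}$, and the predecessor's colour is unchanged, the value $p_a(s')$ equals the payoff that player $a$ would have obtained at $s$ by unilaterally switching to $s'_a$, that is $p_a(s') = p_a(s'_a, s_{-a})$. As $s$ is a Nash equilibrium (a $1$-equilibrium), $s_a$ is a best response to $s_{-a}$, whence $p_a(s'_a, s_{-a}) \le p_a(s)$ and therefore $p_a(s') \le p_a(s)$. This contradicts profitability for $a$, so no coalition of size at most $n-1$ can profitably deviate, and $s$ is a $k$-equilibrium for every $k \in \{1, \ldots, n-1\}$.

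I expect the only genuine idea to be the existence of the boundary node; once that is in hand the best-response step is immediate and requires no calculation. The mild subtlety to state carefully is the combinatorial fact that a proper nonempty subset of the vertices of a cycle cannot be closed under predecessors, which I would phrase exactly as above.
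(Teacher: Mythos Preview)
Your proof is correct and follows essentially the same approach as the paper's: both pick a node $i \in K$ with $i \ominus 1 \notin K$, use that the predecessor's colour is unchanged to conclude $p_i(s') = p_i(s'_i, s_{-i})$, and derive a contradiction with the best-response property. If anything, you are slightly more explicit than the paper in justifying why such a boundary node must exist.
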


\begin{proof}
Take a Nash equilibrium $s$. It suffices to prove that it is an $(n-1)$-equilibrium.
Suppose otherwise. Then for some coalition $K$ of size $\leq n-1$ and a joint strategy $s'$,
$s \betredge{K} s'$ is a profitable deviation. 

Assume $k \ominus 1 = k-1$ if $k > 1$ and $1 \ominus 1 = n$.
Take some $i \in K$ such that $i \ominus 1 \not \in K$.  We have
$p_i(s') > p_i(s)$. Also $p_i(s'_i, s_{-i}) = p_i(s')$, since $s_{i
  \ominus 1} = s'_{i \ominus 1}$.  So $p_i(s'_i, s_{-i}) > p_i(s)$,
which contradicts the fact that $s$ is a Nash equilibrium.
\end{proof}

\NI
\emph{Proof of Theorem \ref{thm:c-weakly}.}  Take a joint strategy
$s$. By Lemma~\ref{lem:weakly} a finite improvement path starts
at $\strprofile$ and ends in a Nash equilibrium $s'$.  By
Lemma~\ref{lem:k} $s'$ is an $(n-1)$-equilibrium. If $s'$ is not a
strong equilibrium, then a profitable deviation $s' \betredge{N} s''$
exists, where, recall, $N$ is the set of all players.  Because of the game
structure the social welfare along each c-improvement path weakly
increases, while in the last step the social welfare strictly
increases.  So $\mathit{\SW}(s'') > \mathit{\SW}(s)$.

If $s''$ is not a strong equilibrium, we repeat the above procedure starting with $s''$.
Since each time the social welfare strictly increases, eventually 
this process stops and we obtain a finite c-improvement path.
\HB
\VV


Using Theorem \ref{thm:c-weakly}, we now show that every coordination
game in which all strongly connected components are simple cycles is
c-weakly acyclic.  We first introduce some notations and make use of
the following well-known decomposition result.

\begin{theorem}[\cite{DPV06}, page 92]
Every directed graph is a directed acyclic graph of its strongly
connected components.
\end{theorem}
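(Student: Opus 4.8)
The statement is the standard fact that the \emph{condensation} of $G$ (the quotient of $G$ by its strongly connected components) is acyclic, so I would phrase the claim that way and prove it directly. First I would recall that the relation on $V$ given by $u \sim v$ iff there is a directed path from $u$ to $v$ and one from $v$ to $u$ (each node reaching itself by the empty path) is an equivalence relation: reflexivity and symmetry are immediate, and transitivity follows by concatenating directed paths. Its classes are precisely the strongly connected components $C_1, \ldots, C_m$, which therefore partition $V$.

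Next I would define the condensation $G^{\ast}$: its nodes are the components $C_1, \ldots, C_m$, and there is an edge $C_a \to C_b$ (with $a \neq b$) exactly when $E$ contains an edge $u \to v$ with $u \in C_a$ and $v \in C_b$. This is well defined, being obtained by projecting the edge relation of $G$ through the quotient map $V \to \{C_1, \ldots, C_m\}$ and discarding loops.

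The heart of the proof is acyclicity of $G^{\ast}$, which I would establish by contradiction. Suppose $G^{\ast}$ contained a cycle $C_{a_1} \to C_{a_2} \to \cdots \to C_{a_k} \to C_{a_1}$ through distinct components with $k \geq 2$. Each edge $C_{a_t} \to C_{a_{t+1}}$ lifts to a concrete edge of $G$ between the corresponding components, and inside every $C_{a_t}$ any node reaches any other by strong connectivity. Splicing the lifted edges together with these intra-component paths yields, for any two indices $s, t$, a directed path in $G$ from a node of $C_{a_s}$ to a node of $C_{a_t}$ and a directed path back. Hence all nodes occurring in $C_{a_1}, \ldots, C_{a_k}$ are pairwise mutually reachable and so lie in a single $\sim$-class, contradicting the distinctness of the components. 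Therefore $G^{\ast}$ contains no cycle and is a DAG.

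The argument is elementary, and the only step requiring genuine care is the splicing: one must route the witnessing walk through each component using its internal strong connectivity so that the concatenation really certifies reachability in both directions around the putative cycle. The remainder is routine bookkeeping about the equivalence relation and the quotient map.
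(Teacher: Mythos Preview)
Your argument is correct and is the standard proof that the condensation of a directed graph is acyclic. Note, however, that the paper does not actually prove this statement: it is quoted as a known result from \cite{DPV06} and used as a black box in the proof of Theorem~\ref{thm:SCCs}, so there is no ``paper's own proof'' to compare against. Your write-up would serve perfectly well as a self-contained justification should one be desired.
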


Given a graph $G=(V,E)$, let $D=(V_D,E_D)$ be the corresponding
directed acyclic graph (DAG) obtained by the above decomposition
theorem and let $g:2^V \to V_D$ be the function that maps each
strongly connected component (SCC) in $G$ to a node in $D$. Let
$g^{-1}(v)=X \subseteq V$ where $g(X)=v$. Note that for each $i \in
V$, there is a unique $v \in V_D$ such that $i \in g^{-1}(v)$, we
denote this node by $v_i$.  Let $|V_D|=m$ and
$\theta=(\theta_1,\theta_2,\ldots,\theta_m)$ be a topological ordering
of $V_D$ (this is well-defined since $D$ is a DAG). We define a
labelling function $l_D:V_D \to \{1,\ldots,m\}$ as follows: for all $v
\in V_D$, $l_D(v)=j$ iff $\theta_j = v$. We can extend $l_D$ to a
function $l:V \to \{1,\ldots,m\}$ in the natural way: for all $i \in
V$, $l(i)=l_D(v)$ if $i \in g^{-1}(v)$.

Note that for each node $v \in V_D$, either $g^{-1}(v)=\{i\}$ for some
$i \in V$ or $g^{-1}(v)=X \subseteq V$ with $|X| \geq 2$ and the
subgraph of $G$ induced by the set of nodes $X$ forms an SCC. Also,
note that every $v \in V_D$ and a joint strategy $\strprofile$ in
$\calg$, defines a coordination game with bonuses $\calg_v$ on the
graph $G(v,\strprofile) = (V',E')$ which is the subgraph induced by
the set of nodes $V'=g^{-1}(v)$. For $i \in V'$ and $a \in A(i)$ we
put $\beta(i,a) := |\{j \in N_i \setminus V' \mid s_j=a\}|$. 

\begin{theorem} \label{thm:SCCs}
Every coordination game on a directed graph $G$ in which all strongly
connected components of $G$ are simple cycles is c-weakly acyclic and
a fortiori has a strong equilibrium.
\end{theorem}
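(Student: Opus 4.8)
The plan is to handle the strongly connected components one at a time, in the topological order $\theta_1,\ldots,\theta_m$ supplied by the decomposition into the DAG $D$, driving each component into a strong equilibrium of its associated subgame $\mathcal{G}_v$ and never revisiting it. Concatenating the finite per-component c-improvement paths will yield a finite c-improvement path of $\mathcal{G}$ from the given initial colouring, which is exactly what is required for c-weak acyclicity.

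The structural fact underpinning everything is that influence only flows forward along $\theta$. Precisely, if $i \in g^{-1}(v)$ and $j \in N_i \setminus g^{-1}(v)$, then the edge $j \to i$ induces an edge $v_j \to v$ in $D$, so $l(j) < l(i)$. Hence the payoff of any node depends only on the colours of nodes in its own component together with nodes in strictly earlier components; in particular, once a component has been settled, no deviation confined to later components can change the payoffs of its members.

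Starting from an arbitrary joint strategy $s$, I would process $\theta_1,\ldots,\theta_m$ in turn. On reaching $\theta_j$ the colours of all earlier components are already fixed, so the bonus $\beta(i,a)=|\{j' \in N_i \setminus V' \mid s_{j'}=a\}|$ defining $\mathcal{G}_{\theta_j}$ is fully determined and stays constant during this stage, since by the structural fact its external in-neighbours lie in earlier components. If $g^{-1}(\theta_j)$ is a single node the player simply moves to a best response; otherwise it is a simple cycle, and Theorem~\ref{thm:c-weakly} gives a finite c-improvement path in $\mathcal{G}_{\theta_j}$ from the current colouring of $\theta_j$ to a strong equilibrium of $\mathcal{G}_{\theta_j}$. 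Because the payoff of each node of $\theta_j$ in $\mathcal{G}$ equals its payoff in $\mathcal{G}_{\theta_j}$ --- the bonus absorbs exactly the contribution of the fixed external neighbours --- every such step lifts to a profitable coalitional deviation in $\mathcal{G}$, while leaving all earlier components, and therefore their payoffs, untouched.

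It remains to check that the colouring $s^*$ obtained after the last stage is a strong equilibrium of $\mathcal{G}$. If some coalition $K$ could profitably deviate from $s^*$ to $s'$, let $j$ be least with $K_j := K \cap g^{-1}(\theta_j) \neq \emptyset$. By the structural fact the payoffs of the members of $K_j$ depend only on $\theta_j$ and strictly earlier components, none of which are altered by $K$ (by minimality of $j$), so $s'$ restricted to $\theta_j$ would be a profitable deviation of $K_j$ in $\mathcal{G}_{\theta_j}$, contradicting that $s^*$ restricted to $\theta_j$ is a strong equilibrium of $\mathcal{G}_{\theta_j}$. The step I expect to demand the most care is precisely this two-way correspondence between $\mathcal{G}_v$ and $\mathcal{G}$: one must argue that profitable deviations transfer faithfully in both directions (resting on the definition of $\beta$ and the absence of backward edges) and that the concatenation of the per-component paths is a legitimate c-improvement path, i.e.\ that settling a later component never disturbs an already-settled one.
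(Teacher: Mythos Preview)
Your proposal is correct and follows essentially the same route as the paper: process the SCCs in topological order, invoke Theorem~\ref{thm:c-weakly} on each cycle (with bonuses absorbing the already-frozen earlier components), lift the resulting per-component c-improvement paths to $\mathcal{G}$, and verify strong equilibrium of the final colouring via the minimal-level argument on a hypothetical deviating coalition. The paper formalises the concatenation via an explicit $\cpath$ operator, but the content and the delicate points you flag (faithful transfer of profitable deviations in both directions, and that settling later components never disturbs earlier ones) are exactly the ones the paper handles.
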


\begin{proof}
Consider a coordination game $\calg$ on a graph $G=(V,E)$ where all
SCCs are simple cycles. Let $D=(V_D,E_D)$ be the corresponding DAG
with $|V_D|=m$. Since all SCCs in $G$ are simple cycles, it follows
that for all $v \in V_D$, either $g^{-1}(v) = \{i\}$ or $g^{-1}(v)=X
\subseteq V$ such that the induced graph on $X$ forms a simple cycle in
$G$. 

Let $v \in V_D$ such that the induced graph on $g^{-1}(v)$ forms a
simple cycle in $G$. For a joint strategy $t$ in $\calg$, consider the
resulting game $\calg_v$ on the graph $(V',E')$.
Let $\strprofile^0=t_{V'}$ (the restriction of the joint strategy $t$
to nodes in $V'$) and let $\rho: \strprofile^0, \strprofile^1, \ldots,
\strprofile^k$ be a finite c-improvement path in $\calg_v$ which is
guaranteed to exist by Theorem \ref{thm:c-weakly}.
Define $\cpath(\calg_v,t)$ as follows:

\NI
$\cpath(\calg_v,t) = 
\begin{cases} 
  \epsilon ~~~~\mbox{ if $t_{V'}$ is a strong equilibrium in $\calg_v$,} \\
  \lambda_t(\strprofile^1), \ldots, \lambda_t(\strprofile^k) ~~~~\mbox{ otherwise, }
\end{cases}
$

\noindent where for all $h \in \{1,\ldots,k\}$, $\lambda_t(\strprofile^h)$ is the
joint strategy in $\calg$ defined as: for all $i \in V$,
$(\lambda_t(\strprofile^h))_i=\strprofile^h_i$ if $i \in V'$ and
$(\lambda_t(\strprofile^h))_i= t_i$ if $i \not\in V'$.  

For a joint strategy $t$ in $\calg$ and $v \in V_D$, if the underlying
graph of the coordination game $\calg_v$ with bonuses consists of
exactly one node, then the game is trivially c-weakly
acyclic. $\cpath(\calg_v,t)$ is then defined analogously.




Let $t^0$ be an arbitrary joint strategy in $\calg$.  We
define a sequence of joint strategies as follows:
\begin{itemize}
\item $\rho_0=t^0$,
\item for $h \in \{0,1,\ldots,m-1\}$, let $\rho_{h+1}=\rho_h \cdot
  \cpath(\calg_v,t^h)$ where $l_D(v)=h+1$ and $t^h=\lasts(\rho_h)$.
\end{itemize}

Let $\rho=\rho_m$. From the definition of $\rho_m$ and $\cpath$, it
follows that $\rho$ is finite sequence of joint strategies in
$\calg$. By induction on the length of $\rho$, we can claim that for
every subsequent pair of joint strategies $t^k$ and $t^{k+1}$ in
$\rho$, there is a coalition $K \subseteq V$ for which $t^{k+1}$ is a
profitable deviation from $t^k$. To complete the proof, it suffices to
argue that $\rho$ is maximal, or equivalently, that $\lasts(\rho)$ is
a strong equilibrium.

Suppose $\lasts(\rho)$ is not a strong equilibrium. Then there exists
$K \subseteq V$ and a joint strategy $\strprofile$ such that there is
a profitable deviation of players in $K$ from $\lasts(\rho)$ to
$\strprofile$. Let $d$ be the least element of the set $\{l(i) \mid i
\in K\}$ and $X=K \cap \{i \in V \mid l(i)=d\}$. By definition of a
profitable deviation, we have that for all $i \in X$, $p_i(s) >
p_i(\lasts(\rho))$. Note that for all $i \in X$ and for all $j \in N_i
\setminus g^{-1}(v_i)$, we have $l(j) < d$. Therefore, $(N_i
\setminus g^{-1}(v_i)) \cap K=\emptyset$. Also note that for all $j
\in g^{-1}(v_i)$, $(\lasts(\rho_d))_{j}=(\lasts(\rho))_{j}$. But this
implies that the coalition $X$ has a profitable deviation from the
joint strategy $(\lasts(\rho_d))_X$ to $s_X$ in the game
$\calg_{v_i}$. This contradicts the fact that $\lasts(\rho_d)$ is a
strong equilibrium in the game $\calg_{v_i}$.
\end{proof}

We conclude this section by considering another class of coordination
games.  Example \ref{exa:rotate} shows that even when only two colours
are used, the coordination game does not need to have the c-FIP.  On
the other hand, a weaker property does hold.

\begin{theorem} \label{thm:two-c}
Every coordination game in which only two colours are used is c-weakly acyclic
and a fortiori has a strong equilibrium.
\end{theorem}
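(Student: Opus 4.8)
The plan is to prove the stronger statement that the game is c-weakly acyclic by constructing, from an arbitrary colouring, a finite c-improvement path ending in a strong equilibrium; existence of a strong equilibrium then follows. Throughout I write the two colours as $a$ and $b$ and identify a colouring $s$ with its $a$-set $\{i\mid s_i=a\}$, ordered by inclusion. The decisive structural feature I will exploit is \emph{monotonicity}: since all weights are non-negative, for a flexible node $i$ the quantity $p_i(a,s_{-i})-p_i(b,s_{-i})$ depends only on the colours of its neighbours and is non-decreasing as the $a$-set grows. I will also use the following reduction for coalitions: if a coalition $K$ profitably deviates, then, because there are only two colours, every member flips to the other colour, and one checks that the subset of $K$ moving to $a$ (respectively to $b$) can already deviate profitably on its own; hence it suffices to consider profitable deviations in which all members adopt a single common colour.

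First I would reach a Nash equilibrium from the given colouring by a monotone two-phase best-response construction, reminiscent of Lemma~\ref{lem:weakly}. In Phase~1 I repeatedly let a node currently coloured $b$ that strictly prefers $a$ switch to $a$; the $a$-set only grows, so by monotonicity any node that has moved to $a$ keeps strictly preferring $a$, and the phase ends with no $b$-node wanting $a$. In Phase~2 I repeatedly let a node currently coloured $a$ that strictly prefers $b$ switch to $b$; now the $a$-set only shrinks, so by monotonicity the invariant ``no $b$-node wants $a$'' is preserved while the offending $a$-nodes are resolved. Each switch is a strict single-player improvement and the $a$-set changes monotonically within each phase, so the process is a finite improvement path whose endpoint is a Nash equilibrium. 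This already gives weak acyclicity, but, unlike in the undirected case, such a Nash equilibrium need not be strong: a coalition of mutually supporting nodes held at one colour may do strictly better by jointly switching to the colour favoured by their \emph{external} neighbours.

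To pass from a Nash equilibrium to a strong equilibrium I would continue with single-colour coalitional deviations, using monotonicity to keep the $a$-set moving in one direction. The key lemma I would aim for is a \emph{no-reversal} property: once a set of nodes has been moved to $a$ by a profitable coalitional step, no later coalition can profitably move any of them back to $b$. The intuition is that a profitable move of a coalition $C$ to $a$ requires each $i\in C$ to have strictly more incoming weight toward $a$ than toward $b$ from neighbours \emph{outside} $C$, and this external $a$-support only strengthens as the $a$-set grows, whereas a reversing coalition would need the opposite external preference, which monotonicity forbids. Granting this, profitable single-colour coalitional deviations can be organised so that the $a$-set (equivalently, the $b$-set) advances monotonically, the process cannot cycle, and it terminates at a colouring stable against every coalition in both directions, i.e.\ a strong equilibrium.

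The step I expect to be the main obstacle is exactly this last one: establishing termination at a configuration that is simultaneously stable against move-to-$a$ and move-to-$b$ coalitions. Note that the social-welfare argument used for cycles in the proof of Theorem~\ref{thm:c-weakly} is \emph{not} available here, because in a directed graph $\SW$ is not a generalized ordinal c-potential: a profitable coalitional deviation can strictly decrease $\SW$, since the deviating coalition ignores the monochromatic out-edges it destroys, and indeed the strong equilibrium one reaches may have strictly smaller social welfare than the Nash equilibrium one started from. Consequently the whole weight of the proof rests on the monotone no-reversal lemma; the delicate case is when the reversing coalition overlaps but does not coincide with the coalition that performed the original move, where the claim has to be verified by careful bookkeeping of the incoming weights contributed by the symmetric difference of the two coalitions.
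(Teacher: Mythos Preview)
Your reduction to single-colour coalitional deviations is correct and is precisely what the paper uses (there, the set $L$ of members of a deviating coalition $K$ that switch to blue is shown to deviate profitably on its own, by PPM). Your monotone two-phase idea is also the heart of the matter. Where your proposal falls short is exactly where you say it does: you have not proved that the process terminates at a profile stable against both colours, and the ``no-reversal'' lemma you formulate---that a node, once moved to $a$ by a profitable coalition, can never be part of a later profitable coalition to $b$---is not the right target and is not obviously true, since intervening $b$-moves can strip away the $a$-support that made the original move profitable.

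The paper's proof resolves this cleanly and more directly. It drops your detour through a single-player Nash equilibrium and works with coalitions from the outset: Phase~1 is a \emph{maximal} sequence of profitable coalitional deviations in which players may only switch to blue, ending at some $s^k$; Phase~2 is a maximal sequence from $s^k$ in which players may only switch to red, ending at $s^{k+l}$. Both phases are finite because the blue-set is strictly monotone within each. The claim is that $s^{k+l}$ is already a strong equilibrium. If not, by your single-colour reduction and the maximality of Phase~2, some non-empty $L$ can profitably deviate to blue from $s^{k+l}$. The punchline is not a forward ``no-reversal'' invariant but a \emph{backward transport}: using PPM and the fact that $s^k$ and $s^{k+l}$ differ only by blue-to-red switches, one shows first that every node of $L$ was already blue at $s^k$, and then that $L$ could have profitably deviated to blue already from $s^k$---contradicting the maximality of Phase~1. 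So two monotone coalitional phases suffice, and the crux is comparing the endpoint of Phase~2 with the endpoint of Phase~1, not tracking individual nodes along the path.

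As a side remark, the paper's argument uses only PPM, so it actually establishes the result for the larger class of two-colour PPM games.
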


\begin{proof}
  We prove the result for a more general class of games, namely the
  ones that satisfy the PPM.  Call the colours blue and red, that we
  abbreviate to $b$ and $r$. When a node selected blue we refer to it
  as a blue node, and the same for the red colour.

Take a joint strategy $s^1$. Consider a maximal sequence $\xi$ of
profitable deviations of the coalitions starting in $s$ in which the
nodes can only switch to blue. At each step the number of blue nodes
increases, so $\xi$ is finite. Let $s^1, \LL, s^k$, where $k \geq 1$,
be the successive joint strategies of $\xi$.

If $s^k$ is a strong equilibrium, then $\xi$ is the desired finite
improvement path. Otherwise consider a maximal sequence $\chi$ of
profitable deviations of the coalitions starting in $s^k$ in which the
nodes can only switch to red. $\chi$ is finite.  Let $s^k, s^{k+1},
\LL, s^{k+l}$, where $l \geq 1$, be the successive joint strategies of
$\chi$.

We claim that $s^{k+l}$ is a strong equilibrium. Suppose otherwise.
Then for some joint strategy $s'$, $s^{k+l} \betredge{K} s'$ is a
profitable deviation of some coalition $K$. Let $L$ be the set of
nodes from $K$ that switched in this deviation to blue.  By the
definition of $s^{k+l}$ the set $L$ is non-empty.

Given a set of nodes $M$ and a joint strategy $s$ we denote by
$(M:b, s_{-M})$ the joint strategy obtained from $s$ by letting the
nodes in $M$ to select blue, and similarly for the red colour.  Also
it should be clear what joint strategy we denote by $(M:b, P
\setminus M:r, s_{-P})$, where $M \sse P$.
 
We claim that $s^{k+l} \betredge{L} (L:b, s^{k+l}_{-L})$ is a profitable deviation of
the players in $L$. Indeed, we have for all $i \in L$
\begin{equation}
p_i(L:b, s^{k+l}_{-L}) > p_i(s^{k+l}), 
  \label{equ:k+l}
\end{equation}
since by the PPM $p_i(L:b, s^{k+l}_{-L}) \geq
p_i(s')$ and by the assumption $p_i(s') > p_i(s^{k+l})$.

Let $M$ be the set of nodes from $L$ that are red in $s^k$.
Suppose that $M$ is non-empty. We show that then
\begin{equation}
  \label{equ:M}
  p_M(M:r, L \setminus M:b, s^k_{-L}) < p_M(M:b, L \setminus M:b, s^k_{-L}).
\end{equation}
Indeed, we have for all $i \in M$
\[
\begin{array}{ll}
     & p_i(M:r, L \setminus M:b, s^{k}_{-L}) \leq p_i(M:r, L \setminus M:b, s^{k+l}_{-L}) \\
\leq & p_i(M:r, L \setminus M:r, s^{k+l}_{-L}) < p_i(M:b, L \setminus M:b, s^{k+l}_{-L}) \\
\leq & p_i(M:b, L \setminus M:b, s^{k}_{-L}),
\end{array}
\]
where the weak inequalities are due to the PPM and the
strict inequality holds by the definition of $L$.

But $s^k = (M:r, L \setminus M:b, s^{k}_{-L})$, so
(\ref{equ:M}) contradicts the definition of $s^k$.
So $M$ is empty, i.e., all nodes from $L$ are blue in $s^k$.
We now have for all $i \in L$
\[
\begin{array}{ll}
  & p_i(L:r, s^k_{-L}) \leq p_i(L:r, s^{k+l}_{-L}) = p_i(s^{k+l}) \\
< & p_i(L:b, s^{k+l}_{-L})  \leq p_i(L:b, s^{k}_{-L}),
\end{array}
\]
where again the weak inequalities are due to the PPM and the
strict inequality holds by (\ref{equ:k+l}).

But $(L:r, s^k_{-L})= s^{k}$, so we proved that $s^{k} \betredge{L}
(L:b, s^{k}_{-L})$ is a profitable deviation. This yields a
contradiction with the definition of $s^k$.
\end{proof}

When the underlying graph is symmetric and the set of strategies for
every node is the same, the existence of strong equilibrium for
coordination games with two colours follows from Proposition 2.2 in
\cite{KBW97a}. Theorem~\ref{thm:two-c} shows a stronger result, that
in the general case, these games are c-weakly acyclic. The following
example shows that when three colours are used, Nash equilibria, so a
fortiori strong equilibria do not need to exist.

\begin{example}\label{exa:no}
{\rm 
Consider the directed graph depicted in Figure~\ref{fig:graph} of
Example \ref{exa:payoff}, together with the sets of colours associated
with the nodes.
We argue that the coordination game associated with this graph does not
have a Nash equilibrium. Note that for nodes 7, 8 and 9 the only
option is to select the unique strategy in its strategy set. The best
response for nodes 4, 5 and 6 is to always select the same strategy as
nodes 1, 2 and 3 respectively. Therefore, to show that the game does
not have a Nash equilibrium, it suffices to consider the strategies of
nodes 1, 2 and 3. We denote this by the triple
$(\strprofile_1,\strprofile_2,\strprofile_3)$. Below we list all such
joint strategies and we underline a strategy that is not a best
response to the choice of other players: $(\underline{a},a,b)$,
$(a,a,\underline{c})$, $(a,c,\underline{b})$,
$(a,\underline{c},c)$, $(b,\underline{a},b)$,
$(\underline{b},a,c)$, $(b,c,\underline{b})$ and
$(\underline{b},c,c)$.  
}
\HB
\end{example}

Call now a graph a \bfe{coloured DAG} (with respect to a colour
assignment $A$) if for each available colour $x$ the components of the
subgraph induced by the nodes having colour $x$ are DAGs.  In view of
Theorem \ref{thm:complete2} it is tempting to try to generalize
Theorem \ref{thm:DAG} to coloured DAGs.  However, the directed graph
depicted in Figure~\ref{fig:graph} is a coloured DAG and, as explained
in the above example, the coordination game on this graph has no Nash
equilibrium.

\section{Complexity issues}
\label{sec:directed-complexity}

Next, we study the complexity of the existence problems and of the problem
of finding strong equilibria. 

\begin{figure}[hbp]
\centering
$
\def\objectstyle{\scriptstyle}
\def\labelstyle{\scriptstyle}
\xymatrix@R=25pt @C=40pt{
& & \bullet \ar[d]^<{\{{R}\}}_{2}\\
& &A_i \ar[rd]^{1} \ar[ddr]_{1}\ar@{}[rd]^<{\{R,{G},x\}}\\
& \bullet \ar[ru]_{} \ar@{}[ur]^<{\{G,{B}\}}^{2}& &\bullet
\ar[d]^{2}\ar@{}[ul]_<{\{R,{G}\}} \\
&C_i \ar[u]^<{\{G,{B},z\}}^{1} \ar[uur]_{1}& &B_i
\ar@{}[u]_<{\{R,{B},y\}} \ar[ld]^{1} \ar[ll]_{1}\\
\bullet \ar[ru]_<{\{{G}\}}^{2}& &\bullet
\ar[ul]_{2}\ar@{}[ur]_<{\{R,{B}\}}& &
\bullet \ar[ul]^<{\{{B}\}}^{2}
}$

\caption{Gadget $D_i$ with three parameters $x,y,z \in \{\top,\bot\}$ \label{fig:gadget} and three distinguished nodes $A_i, B_i, C_i$.}
\end{figure}
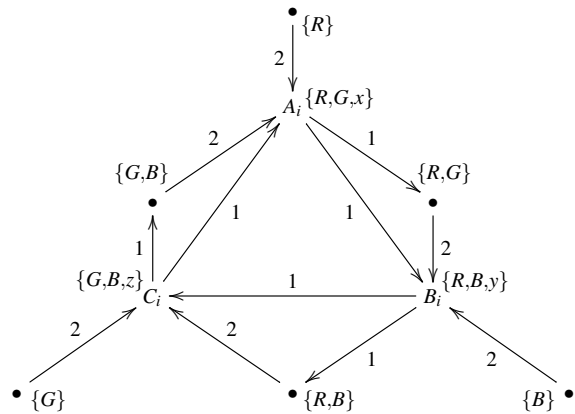

\begin{theorem} \label{thm:NE-NPcomplete}
The Nash equilibrium existence problem in coordination games is NP-complete.
\end{theorem}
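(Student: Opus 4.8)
The plan is to establish both membership in NP and NP-hardness.

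Membership is immediate. A joint strategy $s$ is a certificate of size polynomial in the game, and checking whether $s$ is a \NE\ amounts to verifying, for every node $i$ and every alternative colour $c \in A(i)$, that switching $i$ to $c$ does not strictly increase $p_i(s)$. This is a polynomial number of payoff comparisons, so the Nash equilibrium existence problem lies in NP.

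For hardness I would reduce from 3SAT. Given a formula $\phi = \bigwedge_i C_i$ over variables $v_1, \ldots, v_n$, I build a coordination game $\game$ as follows. Each clause $C_i$ is encoded by a copy of the gadget $D_i$ of Figure~\ref{fig:gadget}. The point of $D_i$ is that its core, obtained by deleting the three escape colours, is exactly the no-equilibrium configuration of Example~\ref{exa:no}: the distinguished nodes $A_i, B_i, C_i$ are locked into the cyclic best-response pattern on $\{R,G,B\}$ that admits no \NE. The only way to break the cycle is for one of $A_i, B_i, C_i$ to profitably switch to its extra colour $x$, $y$, or $z \in \{\top,\bot\}$, and such a switch is profitable precisely when that node currently has a neighbour outside the gadget carrying the matching colour. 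I therefore wire each literal slot of $C_i$ to a variable gadget so that the external neighbour shows the right colour iff the corresponding literal is true; the choice of $\top$ versus $\bot$ as the escape colour records whether the occurrence is positive or negative. Thus $D_i$ admits a local \NE\ iff $C_i$ is satisfied by the assignment read off from the variable gadgets.

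For each variable $v_j$ I add a small variable gadget whose role is to commit to a truth value: in every \NE\ it stabilises on a colour representing $\top$ or $\bot$, it always admits such an equilibrium regardless of the rest of the game, and it feeds this value consistently to all clause gadgets in which $v_j$ occurs. Correctness then splits into two directions. If $\phi$ is satisfiable, I fix a satisfying assignment, set each variable gadget to the matching equilibrium value, let each clause gadget use the escape colour made available by one of its true literals, and check that the resulting colouring is a \NE. Conversely, from any \NE\ of $\game$ the variable gadgets determine a truth assignment, and since each $D_i$ is at equilibrium its core cannot be stuck in the no-\NE\ cycle, so some distinguished node uses its escape colour, forcing the associated literal and hence $C_i$ to be true. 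The construction is clearly polynomial in the size of $\phi$.

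The main obstacle will be the backward direction, and in particular ruling out \emph{spurious} equilibria. I must design the variable gadgets and the wiring so that three invariants hold: (i) each variable gadget, in any \NE, commits to a single well-defined truth value and cannot sit in an ambiguous state that lets a clause cheat; (ii) the escape colour of a clause node is a profitable deviation \emph{only} when its intended literal is true, and never spuriously through interactions with other clause gadgets sharing the same variable; and (iii) the variable gadgets never themselves obstruct the existence of an equilibrium. Verifying these invariants is essentially a finite case analysis over the configurations of each gadget, and it is where the real work lies; once they are in place, the equivalence between satisfiability of $\phi$ and existence of a \NE\ in $\game$ follows directly.
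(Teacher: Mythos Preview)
Your proposal is correct and follows essentially the same approach as the paper: NP membership via guess-and-check, and NP-hardness via a 3SAT reduction that plants the no-\NE\ configuration of Example~\ref{exa:no} inside each clause gadget, with escape colours in $\{\top,\bot\}$ wired to variable nodes.

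A few implementation details the paper handles more simply than you anticipate. First, the ``variable gadget'' is just a single source node $X_j$ with colour set $\{\top,\bot\}$ and no incoming edges; both colours are always best responses, so your invariants (i) and (iii) hold trivially and there is no ambiguity to rule out. Second, the wiring is a single edge $X_j \to A_i$ (resp.\ $B_i,C_i$) of weight~$4$, which the paper justifies can be simulated by unweighted paths; this weight is chosen so that matching $X_j$ strictly dominates any payoff attainable from the $\{R,G,B\}$ cycle, giving your invariant (ii) without a case split. Third, for the backward direction the paper avoids the ``finite case analysis'' you foresee: after fixing the $X_j$ colours and the escaping clause nodes (each now locked at payoff~$4$), it observes that the residual graph is a DAG, so Theorem~\ref{thm:DAG} supplies a \NE\ on the remaining nodes directly. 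These are refinements rather than different ideas; your plan would go through, just with more bookkeeping.
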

\begin{proof}
  The problem is in NP, since we can simply guess a colour assignment
  and checking whether it is a Nash equilibrium can be done in polynomial time.
  
  To prove NP-hardness we provide a reduction from the 3-SAT problem,
  which is NP-complete.  Notice that an edge with a natural number
  weight $w$ can be simulated by adding $w$ extra players to the game.
  More precisely, an edge $(i \to j)$ with the weight $w$ can be
  simulated by the extra set of players $\{i_1, \ldots, i_w\}$ and the
  following $2\cdot w$ unweighted edges: $\{(i \to i_1), (i \to i_2),
  \ldots, (i \to i_w), (i_1 \to j), (i_2 \to j), \ldots, (i_w \to
  j)\}$. Given a colour assignment in the game with the weighted edges,
  we then assign to each of the nodes $i_1, \LL, i_w$ the colour set
  of the node $i$.
  
  Therefore we will assume that edges can have such weights assigned
  to them, because this simplifies our construction.  Assume we are
  given a 3-SAT formula
\[\phi = (a_1 \vee b_1 \vee c_1) \wedge (a_2 \vee b_2 \vee c_2) \wedge \ldots \wedge (a_k \vee b_k \vee c_k)\]
with $k$ clauses and $n$ propositional variables $x_1, \ldots, x_n$,
where each $a_i,b_i,c_i$ is a literal equal to $x_j$ or $\lnot x_j$
for some $j$. We will construct a coordination game $\game$ of size
$\mathcal{O}(k)$ such that $\game$ has a Nash equilibrium iff $\phi$
is satisfiable.

\begin{figure*}
\centering
\newcommand{\redp}[1]{\textcolor{red}{#1}}
\newcommand{\bluep}[1]{\textcolor{blue}{#1}}
\newcommand{\greenp}[1]{\textcolor{green}{#1}}
\centering
\tikzstyle{agent}=[circle,draw=black!80,thick, minimum size=1em]
\tikzstyle{changeblue}=[draw=blue!80,fill=blue!30,thick]
\tikzstyle{changered}=[draw=red!80,fill=red!30,thick]
\tikzstyle{changebrown}=[draw=brown!80,fill=brown!40,thick]
\begin{tikzpicture}[auto,>=latex',shorten >=1pt,on grid]
\begin{scope}
\node[agent,label=above:{\small $\{\top, \bot\}$}](z1){\small{$X_1$}};
\node[agent, right of=z1, node distance=1.5cm,label=above:{\small $\{\top, \bot\}$}](z2){\small $X_2$};
\node[agent, right of=z2, node distance=1.5cm,label=above:{\small $\{\top, \bot\}$}](z3){\small $X_3$};
\node[agent, right of=z3, node distance=1.5cm,label=above:{\small $\{\top, \bot\}$}](z4){\small $X_4$};
\node[draw=none,fill=none, right of=z4,node distance=1.5cm](z5){$\cdots$};
\node[agent, right of=z5, node distance=1.5cm,label=above:{\small $\{\top, \bot\}$}](z6){\small $X_n$};
\end{scope}
\begin{scope}[xshift=0.5cm,yshift=-2cm]
\node[agent,label=left:{\small $\{\redp{\bullet}\}$}](a1) {$4$};
\node[agent,below of=a1, node distance=1.2cm,label=right:{\small $\{\redp{\bullet},\bluep{\bullet}, \bot\}$}](b1){$1$};
\node[below of=b1,node distance=1.2cm,coordinate] (cc){};
\node[agent,left of=cc, label=180:{\small  $\{\bluep{\bullet},\greenp{\bullet}, \top \}$}](c1){$3$};
\node[agent,right of=cc,label=0:{\small $\{\redp{\bullet},\greenp{\bullet}\, \top\}$}](c2){$2$};
\node[below of=c1,node distance=1.2cm,coordinate](dc1){};
\node[below of=c2,node distance=1.2cm,coordinate](dc2){};
\node[agent,left of=dc1,label=right:{\small $\{\bluep{\bullet}\}$}](d1){$6$};
\node[agent,right of=dc2,label=left:{\small $\{\greenp{\bullet}\}$}](d2){$5$};
\draw[->] (b1) to node {{\small $1$}} (c2);
\draw[->] (c2) to node {{\small $1$}} (c1);
\draw[->] (c1) to node {{\small $1$}} (b1);
\draw[->] (a1) to node {{\small $2$}} (b1);
\draw[->] (d1) to node {{\small $2$}} (c1);
\draw[->] (d2) to node [swap] {{\small $2$}} (c2);
\draw[->](z1) to [bend right=30] node{{\small $4$}} (c1);
\draw[->](z2) to [bend right=-30] node{{\small $4$}} (b1);
\draw[->](z3) to [bend right=-20] node{{\small $4$}} (c2);
\end{scope}

\begin{scope}[xshift=6cm,yshift=-2cm]
\node[agent,label=right:{\small $\{\redp{\bullet}\}$}](a1) {$4'$};
\node[agent,below of=a1, node distance=1.2cm,label=right:{\small $\{\redp{\bullet},\bluep{\bullet},\top\}$}](b1){\small{$1'$}};
\node[below of=b1,node distance=1.2cm,coordinate] (cc){};
\node[agent,left of=cc,label=180:{\small  $\{\bluep{\bullet},\greenp{\bullet},\bot\}$}](c1){\small{$3'$}};
\node[agent,right of=cc,label=0:{\small $\{\redp{\bullet},\greenp{\bullet},\bot\}$}](c2){\small{$2'$}};
\node[below of=c1,node distance=1.2cm,coordinate](dc1){};
\node[below of=c2,node distance=1.2cm,coordinate](dc2){};
\node[agent,left of=dc1,label=right:{\small $\{\bluep{\bullet}\}$}](d1){\small{$6'$}};
\node[agent,right of=dc2,label=left:{\small $\{\greenp{\bullet}\}$}](d2){\small{$5'$}};
\draw[->] (b1) to node {{\small $1$}} (c2);
\draw[->] (c2) to node {{\small $1$}} (c1);
\draw[->] (c1) to node {{\small $1$}} (b1);
\draw[->] (a1) to node {{\small $2$}} (b1);
\draw[->] (d1) to node {{\small $2$}} (c1);
\draw[->] (d2) to node [swap] {{\small $2$}} (c2);
\draw[->](z3) to node{{\small $4$}} (c1);
\draw[->](z4) to [bend right=20] node{{\small $4$}} (b1);
\draw[->](z6) to [bend right=-40] node{{\small $4$}} (c2);
\end{scope}

\begin{scope}[xshift=9cm,yshift=-3.5cm]
\node{$\ldots$};
\end{scope}

\end{tikzpicture}
\caption{\label{fig:exa}The game $\game$ corresponding to the formula $\phi = (x_1 \vee \lnot x_2 \vee x_3)   \wedge (\lnot x_3 \vee x_4 \vee \lnot x_n)$, where in each gadget the nodes of indegree 1 are omitted.}
\end{figure*}

First, for every propositional variable $x_i$ we have a corresponding
node $X_i$ in $\game$ with two possible colours $\top$ and $\bot$.
Intuitively, for a given truth assignment, if $x_i$ is true then
$\top$ should be chosen for $X_i$ and otherwise $\bot$ should be
chosen.  In our construction we make use of the following gadget,
denoted by $D_i(x,y,z)$, with three parameters $x,y, z \in \{\top,
\bot\}$ and $i$ used just for labelling purposes, and presented in
Figure \ref{fig:gadget}.  This gadget behaves similarly to the game
without Nash equilibrium analyzed in Example \ref{exa:no}.

What is important is that for all possible parameters values, the
gadget $D_i(x,y,z)$ does not have a Nash equilibrium. Indeed, each of
the nodes $A_i$, $B_i$, or $C_i$ can always secure a payoff 2, so
selecting $\top$ or $\bot$ is never a best response and hence in no
Nash equilibrium a node chooses $\top$ or $\bot$. The rest of the
reasoning is as in Example \ref{exa:no}.

For any literal, $l$, let 
\[
\isPos(l) := \begin{cases}
 \top & \mbox{if $l$ is a positive literal} \\
 \bot & \mathrm{otherwise}
\end{cases}
\]

For every clause $(a_i \vee b_i \vee c_i)$ in $\phi$ we
add to the game graph $\game$ the $D_i(\isPos(a_i),
\isPos(b_i), \isPos(c_i))$ instance of the gadget.
Finally, for every literal $a_i$, $b_i$, or $c_i$ in $\phi$, which is
equal to $x_j$ or $\lnot x_j$ for some $j$, we add an edge from $X_j$
to $A_i$, $B_i$, or $C_i$, respectively, with weight $4$.
We depict an example game $\game$ in Figure \ref{fig:exa}.



We claim that $\game$ has a Nash equilibrium iff $\phi$ is
satisfiable.
 
\smallskip \noindent ($\Rightarrow$) Assume there is a Nash
equilibrium $s$ in the game $\game$.  We claim that the truth
assignment $\nu :\{x_1, \ldots, x_n\} \to \{\top,\bot\}$ that assigns
to each $x_j$ the colour selected by the node $X_j$ in $s$ makes
$\phi$ true.  Fix $i \in \{1, \LL, k\}$. We need to show that $\nu$
makes one of the literals $a_i$, $b_i$, $c_i$ of the
clause $(a_i \vee b_i \vee c_i)$ true.

From the above observation about the gadgets it follows that at least
one of the nodes $A_i, B_i$, $C_i$ selected in $s$ the same colour as
its neighbour $X_j$.  Without loss of generality suppose it is $A_i$.
The only colour these two nodes, $A_i$ and $X_j$, have in common is
$\isPos(a_i)$. So $X_j$ selected in $s$ $\isPos(a_i)$, which by the
definition of $\nu$ equals $\nu(x_j)$. Moreover, by construction $x_j$
is the variable of the literal $a_i$.  But $\nu(x_j) = \isPos(a_i)$
implies that $\nu$ makes $a_i$ true.

\smallskip \noindent ($\Leftarrow$) Assume $\phi$ is satisfiable. Take
a truth assignment $\nu :\{x_1, \ldots, x_n\} \to \{\top,\bot\}$ that
makes $\phi$ true.  For all $j$, we assign to the node $X_j$ the
colour $\nu(x_j)$. We claim that this assignment can be extended to a
Nash equilibrium in $\game$.

Fix $i \in \{1, \LL, k\}$ and consider the $D_i(\isPos(a_i),
\isPos(b_i), \isPos(c_i))$ instance of the gadget.  The truth
assignment $\nu$ makes the clause $(a_i \vee b_i \vee c_i)$ true.
Suppose without loss of generality that $\nu$ makes $a_i$ true. We
claim that then it is always a unique best response for the node $A_i$
to select the colour $\isPos(a_i)$.

Indeed, let $j$ be such that $a_i = x_j$ or $a_i = \lnot x_j$.  Notice
that the fact that $\nu$ makes $a_i$ true implies that $\nu(x_j) =
\isPos(a_i)$. So when node $A_i$ selects $\isPos(a_i)$, the colour
assigned to $X_j$, its payoff is 4.

This partial assignment of colours can be completed to a Nash
equilibrium. Indeed, remove from the directed graph of $\game$ all
$X_j$ nodes and the nodes that secured the payoff 4, together with the
edges that use any of these nodes. The resulting graph has no cycles,
so by Theorem \ref{thm:DAG} the corresponding coordination game has a
Nash equilibrium.  Combining both assignments of colours we obtain a
Nash equilibrium in $\game$.
%
%
\end{proof}

\begin{corollary}
The strong equilibrium existence problem in coordination games is NP-complete.
\end{corollary}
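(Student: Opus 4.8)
The plan is to prove both NP-membership and NP-hardness, reusing the reduction from the proof of Theorem~\ref{thm:NE-NPcomplete}. The non-routine part is membership: verifying that a given colouring is a strong equilibrium naively requires inspecting exponentially many coalitions, so I first need a polynomial verification procedure. The hardness part will then reuse the gadget construction almost verbatim, the only new content being to check that the Nash equilibrium built when $\phi$ is satisfiable is in fact a strong equilibrium.

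For membership, the key observation I would establish first is that in a coordination game a profitable coalitional deviation can always be reduced to one in which all deviators switch to a single common colour. Concretely, if a coalition $K$ profitably deviates from $s$ to $s'$, pick any colour $c$ used by $s'$ on $K$ and let $K_c = \{i \in K : s'_i = c\}$; letting only $K_c$ switch to $c$ (keeping everyone else at $s$) weakly increases each such player's payoff relative to $s'$, by the PPM, since this can only add further $c$-coloured in-neighbours, namely those nodes of $K \setminus K_c$ that were already $c$ in $s$. Hence each $i \in K_c$ still strictly improves, so $K_c$ is a profitable single-colour deviation. Consequently $s$ is a strong equilibrium iff for no colour $c$ is there a non-empty coalition that profitably switches to $c$. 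For a fixed $c$ this last test is polynomial: the payoff a node would get by switching to $c$ is monotone in the set of fellow deviators, so starting from the set of all candidate nodes (those having $c$ available and not already coloured $c$) and repeatedly discarding any node that fails to strictly improve even when every remaining candidate switches to $c$, one reaches in at most $n$ rounds the greatest set of simultaneously improving deviators; it is non-empty iff a profitable single-colour deviation to $c$ exists. Guessing a colouring and running this test for each of the polynomially many colours puts the problem in NP.

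For hardness I would take the same game $\game$ built from a 3-SAT formula $\phi$. If $\phi$ is unsatisfiable then, as shown in Theorem~\ref{thm:NE-NPcomplete}, $\game$ has no Nash equilibrium, hence no strong equilibrium. If $\phi$ is satisfiable I reuse the Nash equilibrium $s$ constructed there and argue it is strong. Each variable node $X_j$ is a source, so its payoff is identically $0$ and it can never belong to a profitable coalition; each node that secured payoff $4$ by matching its variable already attains the maximum payoff compatible with its (fixed) in-neighbours and so cannot improve either. Thus any hypothetical profitable coalition lives entirely inside the graph obtained by deleting the $X_j$ nodes and the payoff-$4$ nodes, which is acyclic; since the deleted in-neighbours are frozen, such a deviation would be a profitable coalitional deviation in the induced coordination game with bonuses on this DAG. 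But by the bonus version of Theorem~\ref{thm:DAG} every Nash equilibrium of a DAG game is a strong equilibrium, and the restriction of $s$ to the DAG is a Nash equilibrium, so no such deviation exists. Hence $s$ is a strong equilibrium, and $\phi$ is satisfiable iff $\game$ has a strong equilibrium.

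The step I expect to be the main obstacle is the membership argument, precisely the reduction to single-colour deviations: without it, checking the strong-equilibrium property appears to be a coNP question and would only place the existence problem in $\Sigma_2^p$. A minor point to handle carefully is that Theorem~\ref{thm:DAG} must be invoked for coordination games with bonuses; since bonuses are additive constants that do not depend on the other players' strategies, the lexicographic-potential argument in its proof goes through unchanged, so this extension is immediate.
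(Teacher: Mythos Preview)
Your hardness argument coincides with the paper's: its proof of the corollary simply observes that in the reduction of Theorem~\ref{thm:NE-NPcomplete} the $(\Rightarrow)$ direction holds a fortiori for strong equilibria, and that in the $(\Leftarrow)$ direction the Nash equilibrium produced is already strong by Theorem~\ref{thm:DAG}. Your write-up is more explicit about why the source nodes $X_j$ and the payoff-$4$ nodes cannot sit in any profitable coalition, and about needing the bonus variant of Theorem~\ref{thm:DAG}, but the substance is identical.

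Where your proposal goes beyond the paper is NP-membership. The paper's two-line proof does not address it: it tacitly leans on the guess-and-check argument from Theorem~\ref{thm:NE-NPcomplete}, which only shows that verifying a \emph{Nash} equilibrium is polynomial. You are right that verifying a strong equilibrium is not obviously polynomial, and your fix is correct. The PPM indeed yields the single-colour reduction (for $i\in K_c$, passing from $s'$ to the deviation where only $K_c$ switches can only restore $c$-coloured in-neighbours among $K\setminus K_c$), and the monotone peeling procedure correctly computes, for each colour $c$, the inclusion-maximal set of nodes that strictly improve by jointly switching to $c$: if some profitable $T$ existed, the first node of $T$ to be deleted would, by monotonicity, also strictly improve under the larger current candidate set, contradicting its deletion. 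So your membership argument is sound and your overall proof is in fact more complete than the paper's on this point.
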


\begin{proof}
  It suffices to note that in the above proof the ($\Rightarrow$)
  implication holds for a strong equilibrium, as well, while in the
  proof of the ($\Leftarrow$) implication by virtue of Theorem
  \ref{thm:DAG} actually a strong equilibrium is constructed.
\end{proof}

An interesting application of Theorem \ref{thm:NE-NPcomplete} is in
the context of polymatrix games.
 These are finite strategic form games in which the influence of a pure strategy
 selected by any player on the payoff of any other player is always the
 same, regardless of what strategies other players select.  
 Formally,
 for all pairs of players $i$ and $j$ there exists a partial payoff
 function $a^{ij}$ such that for any joint strategy
 $\strprofile=(\strprofile_1,\ldots,\strprofile_n)$, the payoff of
 player $i$ is given by $\payoff_i(\strprofile):=\sum_{j \neq i}
 a^{ij}(\strprofile_i,\strprofile_j)$.
In \cite{SA15} we proved that deciding whether a polymatrix game has a
Nash equilibrium is NP-complete. We can strengthen this result as follows.

\begin{theorem}
\label{thm:polymatrix-NP}
Deciding whether a polymatrix game with 0/1 partial payoffs has a Nash equilibrium is NP-complete.
\end{theorem}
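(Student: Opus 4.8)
The plan is to establish NP-membership by the routine guess-and-verify argument, and then to obtain NP-hardness essentially for free by recognizing that every unweighted coordination game is already a polymatrix game whose partial payoffs take only the values $0$ and $1$. In this way the entire reduction from Theorem~\ref{thm:NE-NPcomplete} can be reused verbatim.

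First I would argue membership in NP. Given a polymatrix game with 0/1 partial payoffs, one guesses a joint strategy $s$ and checks in polynomial time whether it is a Nash equilibrium: for each player $i$ one computes $p_i(s) = \sum_{j \neq i} a^{ij}(s_i, s_j)$ and compares it against $p_i(s_i', s_{-i})$ for every alternative strategy $s_i' \in S_i$. Since every $a^{ij}$ is $\{0,1\}$-valued, each of these sums is a count of satisfied pairs and is computable directly, so the verification is polynomial.

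For NP-hardness, the key observation is that the coordination games produced in the proof of Theorem~\ref{thm:NE-NPcomplete} lie within the class under consideration. Recall that an unweighted coordination game has payoff $p_i(s) = |\{j \in N_i \mid s_i = s_j\}|$. Defining, for each ordered pair of players $i \neq j$, the partial payoff function $a^{ij}(x,y) = 1$ if $j \to i \in E$ and $x = y$, and $a^{ij}(x,y) = 0$ otherwise, one immediately gets $p_i(s) = \sum_{j \neq i} a^{ij}(s_i, s_j)$. Thus every unweighted coordination game \emph{is} a polymatrix game with 0/1 partial payoffs, sharing the same players, the same strategy sets, and identical payoff functions, hence identical Nash equilibria. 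Note that non-neighbours contribute the constant zero function, which is still $\{0,1\}$-valued, so the construction respects the required format.

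It then suffices to observe that the reduction in Theorem~\ref{thm:NE-NPcomplete} outputs an unweighted game: the weighted edges appearing in the gadgets are eliminated by the simulation that introduces extra players joined by unit-weight edges, as described at the start of that proof. Consequently the coordination game assigned to a 3-SAT formula $\phi$ is unweighted, and by the translation above it is a polymatrix game with 0/1 partial payoffs that has a Nash equilibrium if and only if $\phi$ is satisfiable. This transfers NP-hardness to the present problem and, together with NP-membership, proves the theorem. I expect no genuine obstacle; the only point requiring care is confirming that the game delivered by the reduction carries no residual edge weights, which is exactly what the weight-elimination step in the proof of Theorem~\ref{thm:NE-NPcomplete} guarantees.
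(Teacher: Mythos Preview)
Your proposal is correct and follows essentially the same approach as the paper: both argue NP-membership via guess-and-verify, and both obtain NP-hardness by observing that an unweighted coordination game is a polymatrix game with $0/1$ partial payoffs via the identical definition $a^{ij}(x,y)=1$ iff $j\in N_i$ and $x=y$, thereby importing the reduction of Theorem~\ref{thm:NE-NPcomplete}. Your explicit remark that the weighted edges in the gadgets must first be replaced by the unit-weight simulation is a welcome clarification that the paper leaves implicit.
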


\begin{proof}
  We can efficiently translate any coordination game
  $\mathcal{G}(G,M,w,A,\beta)$ into a polymatrix game $\mathcal{P}$
  with only 0/1 partial payoff as follows.  The number of players in
  $\mathcal{P}$ will be equal to the number of nodes in $G$ and the
  set of strategies for each player will be $M$.  We define
  $a^{ij}(\strprofile_i,\strprofile_j):=1$ if $\strprofile_i =
  \strprofile_j$ and $j \in N_i$, and
  $a^{ij}(\strprofile_i,\strprofile_j):=0$ otherwise.  
  
  Notice that any joint strategy
  $\strprofile=(\strprofile_1,\ldots,\strprofile_n)$ in $\mathcal{G}$
  is also a joint strategy in $\mathcal{M}$ with exactly the same
  payoff, because $\payoff^\mathcal{P}_i(\strprofile)=\sum_{j \neq i}
  a^{ij}(\strprofile_i,\strprofile_j) = |\{j \in N_i \mid s_i = s_j\}|
  = p^\mathcal{G}_i(s)$.  It follows that Nash equilibria in
  $\mathcal{G}$ and $\mathcal{P}$ coincide.  In particular, there
  exists Nash equilibrium in $\mathcal{G}$ if and only if there exists
  one in $\mathcal{P}$, but the former problem was shown to be {\sc
    NP}-hard in Theorem \ref{thm:NE-NPcomplete}, so the latter is also
  {\sc NP}-hard. On the other hand, for any polymatrix game we can
  guess a joint strategy and check whether it is a Nash equilibrium in
  polynomial time, which shows this decision problem is in fact
  NP-complete.
\end{proof}

Next, we determine the complexity of finding a strong equilibrium.
We begin with the following auxiliary result.

\begin{theorem} \label{thm:linear-time-on-cycle}
A strong equilibrium of a coordination game with bonuses on a simple
cycle can be found in linear time.
\end{theorem}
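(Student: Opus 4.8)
The plan is to replace the iterative, social-welfare-increasing loop used in the proof of Theorem~\ref{thm:c-weakly} (which is a priori not linear) by a direct structural characterization of the strong equilibria on a cycle, and then show that this characterization can be tested and realized by a single linear-time sweep. Throughout I write the cycle as $1 \to \cdots \to n \to 1$, let $w_i$ denote the weight of the edge $(i \ominus 1) \to i$, put $B_i := \max_{c \in A(i)} \beta(i,c)$, and let $C := \bigcap_{i} A(i)$ be the set of common colours. For $c \in C$ write $s^c$ for the monochromatic colouring with $s^c_i = c$ for all $i$, so that $p_i(s^c) = w_i + \beta(i,c)$.

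First I would record a structural lemma: a Nash equilibrium $s$ is a strong equilibrium if and only if there is no common colour $c \in C$ with $w_i + \beta(i,c) > p_i(s)$ for all $i$. One direction is immediate, since such a $c$ would give a profitable deviation of the grand coalition to $s^c$. For the converse, suppose $s$ is Nash but not strong, so some coalition deviates profitably to $s'$. By Lemma~\ref{lem:k} this coalition must be the whole set $N$. If some edge $(j \ominus 1) \to j$ were unmatched in $s'$, then $p_j(s') = \beta(j, s'_j) \le B_j \le p_j(s)$, the last inequality holding because $s$ is Nash and $j$ can always secure $B_j$; this contradicts the strict improvement of $j$. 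Hence every edge is matched in $s'$, so $s'$ is monochromatic on a common colour $c$, which is exactly a dominating colour.

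The decisive observation is a monotonicity fact: if a common colour $c$ dominates a Nash equilibrium $s$, then $w_i + \beta(i,c) > p_i(s) \ge B_i$ for every $i$, so $s^c$ is itself Nash; and if $\beta(i,c) \le \beta(i,c')$ for all $i$, then $s^c$ Nash implies $s^{c'}$ Nash. Let $T := \{c \in C \mid w_i + \beta(i,c) \ge B_i \text{ for all } i\}$ be the common colours inducing a monochromatic Nash equilibrium. If $T \neq \emptyset$, I would choose $c^*$ maximizing $\sum_i \beta(i,c)$ over $T$ and output $s^{c^*}$: it is Nash, and by the structural lemma any common colour $c'$ witnessing failure of strongness would satisfy $\beta(i,c') > \beta(i,c^*)$ for all $i$, forcing $c' \in T$ with strictly larger total bonus, a contradiction; hence $s^{c^*}$ is strong. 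If $T = \emptyset$, then no common colour dominates any Nash equilibrium, so any Nash equilibrium is already strong.

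This yields the algorithm. I would first compute a Nash equilibrium $s^*$ in linear time by the three-phase procedure of Lemma~\ref{lem:weakly}, precomputing each $B_i$ and an argmax colour so that each of the $O(n)$ best-response moves costs $O(1)$. Then I would compute $C$ by counting, for each colour, how many nodes list it (the common colours being those with count $n$), and in a single pass over all node--colour pairs compute, for each $c \in C$, both its total bonus $\sum_i \beta(i,c)$ and the boolean $\min_i \bigl(w_i + \beta(i,c) - B_i\bigr) \ge 0$ deciding $c \in T$; finally output $s^{c^*}$ if $T \neq \emptyset$ and $s^*$ otherwise. The main obstacle is linearity of the bookkeeping rather than correctness: the only potentially dangerous cost is testing all $n$ nodes against all common colours, but this is safe because $|C| \cdot n = \sum_i |C| \le \sum_i |A(i)|$, since $|C| \le |A(i)|$ for every $i$. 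Hence the whole procedure runs in time $O\!\left(n + \sum_i |A(i)|\right)$, linear in the input size.
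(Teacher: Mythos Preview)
Your argument is correct, and in its second half takes a genuinely different route from the paper. Both proofs start identically: find a Nash equilibrium via the three-phase procedure of Lemma~\ref{lem:weakly}, then invoke Lemma~\ref{lem:k} to reduce the question of strongness to whether the grand coalition can deviate, and observe (as you do) that any such deviation must be monochromatic on a common colour. Where you diverge is in how you certify strongness. The paper exploits integrality: since bonuses are natural numbers and edge weights are~$1$, every player in a Nash equilibrium is at most one below her maximum possible payoff, so a grand-coalition improvement forces \emph{every} player to her absolute maximum $B_i+1$; hence the dominating colour must lie in $\bigcap_i \mathit{MA}(i)$, and one simply intersects these sets. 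You instead work with the larger set $T$ of common colours whose monochromatic profile is Nash, and pick the one maximising total bonus, arguing by a clean domination/monotonicity step that it cannot itself be dominated. Your characterisation does not use integrality at all (though your invocation of Lemma~\ref{lem:weakly} for the Nash step still does), so it would survive unchanged if bonuses or weights were real-valued; the paper's shortcut would not. Conversely, the paper's check is slightly simpler to implement: one intersection of max-bonus sets rather than a scan over all of $T$ with a running total.

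One small quibble: your claim that each best-response move costs $O(1)$ after precomputing $B_i$ and an argmax colour tacitly assumes constant-time lookup of $\beta(i, s_{i\ominus 1})$, which you have not justified from the list representation. This is harmless for the theorem, since the paper's cruder bound of $O(|A(i)|)$ per move already gives linearity over the three phases; but if you want the $O(1)$ claim, say explicitly that you build a direct-address table indexed by colour.
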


\begin{proof}
Let $n$ be the number of players in the game and $C$ the number of
possible colours. We assume adjacency list representation for the game
graph, binary representation of the bonuses and that the list of
colours available to player $i$ is given as a list of length $|A(i)|$
of elements of size $\log C$. Formally, the size of the input for
player $i$ only is $\Theta(|A(i)|\log C + \sum_{c\in A(i)} \log
(\beta(i,c)+1))$; the sum of these over $i=1,\ldots,n$ gives the total
size of the input.  

First note that for any colour assignment, the best response of the
$i$-th player can be found in time linear in the size of her part of
the input just by checking all possible colours in $A(i)$. Second,
each phase of the algorithm in Lemma \ref{lem:weakly} looks for the best
response (with a preference given to colours with a higher bonus) of
each player at most once, which will require time linear in the size
of the whole input.  The algorithm requires at most three such phases
before a Nash equilibrium is found, so it runs in linear time.

Note that thanks to Lemma \ref{lem:k} we know that any NE in such a
game structure is already a $(n-1)$-equilibrium, so the only way this
joint strategy is not a strong equilibrium is when all $n$ players can strictly
improve their payoff.  However, in any Nash equilibrium a player has to have her payoff at
at most one below the maximum possible one, because that is the minimum payoff for picking a colour with
the highest bonus.
Moreover, player's payoff can only be a natural number.  

Therefore, the only possibility when a NE is not a strong equilibrium 
is when there is a joint strategy which gives all the players
their maximum possible payoff, i.e.~each player is assigned
a colour with the highest possible bonus as well as gets
an extra $+1$ to her payoff for colour agreement with her
only neighbour. 
The latter implies that all the players need to pick the same colour in such a joint strategy.

To check whether such a joint strategy exists we do the following. 
Let $p = \text{argmin}_i |A(i)|$ be the player with the least number
of colours to choose from. 
We pick the set of her colours with the maximal bonus and 
intersect it with the set of colours with the maximal bonus for every other player.
An intersection of two sets represented as
lists of length $a$ and $b$ of elements of size $K$ can be done in
$\Theta(aK+bK)$ time, so the total running time will be
$\Theta(n |A(p)| \log C+ \sum_{i=1}^n |A(i)| \log C) =
\Theta(\sum_{i=1}^n |A(i)| \log C)$, because $|A(p)| \leq |A(i)|$ for
all $i$, which is linear. 
If the final set is empty then any NE is a strong equilibrium and otherwise
we know how to construct one.
%
\end{proof}

\begin{corollary} 
\label{cor:linear-time-on-scc}
A strong equilibrium of a coordination game on a graph in which all
strongly connected components are simple cycles can be computed
in linear time.
\end{corollary}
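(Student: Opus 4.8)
The plan is to read an algorithm directly off the existence proof of Theorem~\ref{thm:SCCs} and verify that each of its phases runs in linear time. Recall that that proof builds a strong equilibrium by sweeping through the condensation DAG $D=(V_D,E_D)$ of $G$ in a topological order $\theta_1,\ldots,\theta_m$ and, for each SCC $v=\theta_h$, computing a strong equilibrium of the induced coordination game with bonuses $\calg_v$ on $g^{-1}(v)$, where the bonus $\beta(i,c)=|\{j\in N_i\setminus g^{-1}(v)\mid s_j=c\}|$ records the contribution of the already-coloured neighbours lying outside the component. So the first step is to compute the SCC decomposition together with a topological ordering of $D$; this is standard and takes time $O(|V|+|E|)$ (for instance, by Tarjan's algorithm).

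Next I would process the SCCs in the order $\theta_1,\ldots,\theta_m$, keeping fixed the colouring assigned so far. The key structural fact, which is exactly the one used in the correctness argument of Theorem~\ref{thm:SCCs}, is that by the choice of topological order every cross-component in-neighbour of a node in $g^{-1}(v)$ belongs to an SCC handled earlier, hence is already coloured when $v$ is reached. Thus the bonuses for $\calg_v$ are well-defined at that moment, and they can be accumulated by a single scan over the edges that cross component boundaries: each such edge $j\to i$ is inspected exactly once over the whole run and contributes one unit to $\beta(i,s_j)$ whenever $s_j\in A(i)$. This bookkeeping therefore costs $O(|E|)$ in total, up to the $\log C$ factor needed for colour comparisons.

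For the component itself I would split into the two cases of Theorem~\ref{thm:SCCs}: if $g^{-1}(v)=\{i\}$ is a singleton, assign $i$ a colour maximising $\beta(i,\cdot)$, which is a best response and is found in time linear in $i$'s part of the input; if $g^{-1}(v)$ is a simple cycle, invoke Theorem~\ref{thm:linear-time-on-cycle} on $\calg_v$ to obtain a strong equilibrium of $\calg_v$ in time linear in the size of the input of $\calg_v$. Since the induced games partition the node set $V$, and since the derived bonuses never exceed the cross-component in-degrees (which sum to at most $|E|$), the inputs of the games $\calg_v$ are dominated by the original input, so summing the per-component running times over all of $V_D$ stays linear. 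Correctness comes for free: the joint strategy produced is precisely the profile $\lasts(\rho)$ assembled in the proof of Theorem~\ref{thm:SCCs}, which is shown there to be a strong equilibrium.

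The main obstacle I anticipate is the accounting rather than the algorithmics: one must check that computing and representing the derived bonuses for the cycle subgames, together with the per-colour work hidden inside Theorem~\ref{thm:linear-time-on-cycle}, add up across all SCCs without introducing an extra factor. This reduces to the two observations already flagged, namely that every original edge contributes to the bonus computation of exactly one component, and that the total size of the induced inputs (colour lists plus binary bonus encodings) is bounded by the size of the original input, so that the overall cost remains $\Theta\bigl(\sum_i |A(i)|\log C + |E|\bigr)$, that is, linear.
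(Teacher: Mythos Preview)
Your proposal is correct and follows exactly the route the paper intends: the corollary is stated without proof, and the implicit argument is precisely to run the construction of Theorem~\ref{thm:SCCs} (topological sweep over the condensation, solving each component as a coordination game with bonuses) while invoking Theorem~\ref{thm:linear-time-on-cycle} for the cycle components and noting that the per-component work sums to linear. Your accounting for the derived bonuses via cross-component edges and for the aggregate input size of the subgames is the natural way to make this explicit.
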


\section{Conclusions}
\label{sec:conc}

We presented here a study of a simple class of coordination games on
directed graphs. We focused on the existence of Nash and strong
equilibria. We also studied the complexity of checking for the
existence of Nash and strong equilibria, as well as the complexity of
computing a strong equilibrium in certain cases where it is guaranteed
to exist.

A number of open problems remain. We showed that in general Nash
equilibria and strong equilibria are not guaranteed to exist.
However, if the underlying graph is a DAG, is colour
complete or is such that every SCC is a simple cycle, then strong
equilibria always exist. It would be interesting to identify other
classes of graphs for which Nash or strong equilibria exist.

The proof of Lemma \ref{lem:weakly} shows that in the case of a simple
cycle, starting from any initial joint strategy a Nash equilibrium can
be found by an improvement path of length at most $3n$. Also, each
step of such a path can be constructed in linear time.  Additionally,
the proof of Theorem \ref{thm:c-weakly} shows that a strong
equilibrium can be found by an improvement path of length at most
$3n+1$, possibly augmented by a single profitable deviation of all
players.  It would be interesting to extend this analysis of bounds on
the lengths of improvement paths to other cases when a Nash or a
strong equilibrium is known to exist.

In the future we plan to study the inefficiency of equilibria in
coordination games on directed graphs. Also, we plan to study 
coordination games on finite directed weighted graphs. While we
already defined here these games, we used weights solely as a means to
simplify the argument in the proof of Theorem~\ref{thm:NE-NPcomplete}.
It should be noted that Lemma \ref{lem:weakly} does not hold for finite directed
weighted graphs and, as a consequence, Theorems 
\ref{thm:c-weakly}, \ref{thm:SCCs}, and \ref{thm:linear-time-on-cycle} 
do not hold either.
A counterexample to Lemma \ref{lem:weakly} can be constructed 
by modifying the game in Figure \ref{fig:graph}
as follows. Nodes 4, 5, 6 are removed and replaced by assigning 
weight 2 to all the edges in the cycle. Nodes 7, 8, 9 are also removed and 
replaced by a +1 bonus to the colour of the node removed.
It is easy to see that the behaviour of this new game will 
mimic the game in Figure \ref{fig:graph}.
On the other hand, Theorem \ref{thm:DAG} and its proof is still valid for 
finite directed weighted graphs
as well is Theorem \ref{thm:NE-NPcomplete}, 
because checking whether a colour assignment 
is a Nash equilibrium can still 
be done in polynomial time for them.

As an example of coordination games on weighted directed graphs
consider the problem of a choice of the trade treaties between various
countries. Assume a directed weighted graph in which the nodes are the
countries and the weight on an edge $i \to j$ corresponds to the
percentage of the overall import of country $j$ from country $i$.
Suppose additionally that each country should choose a specific trade
treaty, that the options for the countries differ (for instance
because of its geographic location) and that each treaty offers the
same tax-free advantages. Then once the countries choose the treaties,
the payoff to each country is the aggregate percentage of its import
that is tax-free.

The case of weighted directed graphs can be seen as a minor
modification of the \bfe{social network games} with obligatory product
selection that we introduced and analyzed in \cite{AS13}.
These are games associated with a
threshold model of a social network introduced in \cite{AM11} which is
based on weighted graphs with thresholds. The difference consists of
using thresholds equal to 0. However, setting the
thresholds to 0 essentially changes the nature of the games and
crucially affects the validity of several arguments.

\subsection*{Acknowledgments}
We are grateful to Mona Rahn and Guido Sch\"{a}fer for useful
discussions and thank Piotr Sankowski and the referees for helpful
comments.  First author is also a Visiting Professor at the University
of Warsaw.  He was partially supported by the NCN grant nr
2014/13/B/ST6/01807. The last author is partially supported by EPSRC
grant EP/M027287/1.

\bibliographystyle{eptcs}
\bibliography{e,clustering}
\end{document}

As an example consider the problem of a choice of the trade treaties
between various countries. Assume a directed weighted graph in which
the nodes are the countries and the weight on an edge $i \to j$
corresponds to the percentage of the overall import of country $j$
from country $i$. Suppose additionally that each country should choose
a specific trade treaty, that the options for the countries differ
(for instance because of its geographic location) and that each treaty
offers the same tax-free advantages. Then once the countries choose
the treaties, the payoff to each country is the aggregate percentage of
its import that is tax-free.

Another natural example concerns mobile phone users and their choice
of network operator. We will represent each user by a node, each
network operator by a colour (e.g. Orange) that a user can adopt and
the edge $i \to j$ will represent the fact that $j$ frequently calls
$i$. The weight on the edge can represent, e.g., the frequency of the
call, its average duration, or total cost over some set period of
time.  This graph is directed since some people call more frequently
than others and if there is an edge $i \to j$, there may not be one
back.  (Think of parents calling frequently their children.)  Suppose
now that mobile network operators allow free calls among its users.
Then each mobile phone user faces a strategic choice of picking a
network operator (adopting a colour) that maximizes the number of
people he can call for free or, for instance, maximizes the duration
of the free calls in the case of weighted graphs.